\definecolor{MyDarkBlue}{rgb}{0.15,0.15,0.45}
\newsavebox{\ns}
\newsavebox{\dbrane}
\newsavebox{\dbshort}
\def\be{\begin{equation}}
\def\ee{\end{equation}}
\def\bea{\begin{eqnarray}}
\def\eea{\end{eqnarray}}
\def\eq#1 { \begin{equation} #1 \end{equation} }
\newlength{\sswidth}
\numberwithin{equation}{section}       
\newtheorem{lemma}{Lemma}
\newtheorem{theorem}{Theorem}
\newtheorem{corollary}{Corollary}
\newtheorem{definition}{Definition}
\newcommand*\thmref[1]{Theorem \ref{#1}}
\newcommand*\corref[1]{Corollary \ref{#1}}
\newcommand*\lemref[1]{Lemma \ref{#1}}
\newcommand*\BY[1]{\qquad \text{by #1}}
\newcommand*\setdef[3][]{\mathopen #1 \lbrace\, #2 \mathrel{ #1 \vert} #3 \, \mathclose #1 \rbrace}
\newcommand*\set[2][]{\mathopen #1 \lbrace #2 \mathclose #1 \rbrace}
\DeclareMathOperator\re{Re} 
\newcommand*\Arg[2][]{\mathopen #1 ( #2 \mathclose #1 )} 
\newcommand*\qm[1][!]{\ifx#1! \mu \else \mu_{#1} \fi} 
\newcommand*\Prob{P} 
\newcommand*\hi[1][!]{\ifx#1! \gamma \else \gamma_{ #1 } \fi} 
\newcommand*\HS[1][!]{\ifx#1! \Omega \else \Omega_{#1} \fi} 
\newcommand*\Eset[1]{\mathfrak{#1}} 
\newcommand*\EA[1][!]{\ifx#1! \Eset{A} \else \Eset{A}_{#1} \fi} 
\newcommand*\QuasiSystem[1][!]{\ifx#1! (\HS,\EA,f) \else (\HS[#1],\EA[#1],f_{#1}) \fi} 
\newcommand*\System[1][!]{\ifx#1! (\HS,\EA,D) \else (\HS[#1],\EA[#1],D_{#1}) \fi} 
\newcommand*\Sys[1][!]{\ifx#1! \Psi \else \Psi_{#1} \fi} 
\newcommand*\SysSpace[1]{\mathcal{#1}} 
\newcommand*\Strong{\SysSpace{S}} 
\newcommand*\Weak{\SysSpace{W}} 
\newcommand*\Herm{\SysSpace{Q}} 
\newcommand*\PosEntry{\SysSpace{R^+}} 
\newcommand*\dual[1]{\widehat{#1}}
\newcommand*\ev[1]{\mathrm{S}^{\mathrm{e}}_{#1}} 
\newcommand*\od[1]{\mathrm{S}^{\mathrm{o}}_{#1}} 
\newcommand*\evensum{\sigma^{\mathrm{ee}}} 
\newcommand*\eo{\sigma^{\mathrm{eo}}} 
\begin{document}

\begin{titlepage}

\vfill



\begin{center}
   \baselineskip=16pt
   {\Large\bf An argument for strong positivity of the decoherence functional in the path integral approach to the foundations of quantum theory}
  \vskip 1.5cm
Fay Dowker${}^{a,b}$ and Henry Wilkes${}^a$\\
     \vskip .6cm
            \begin{small}
      \textit{
      		${}^{a}${Blackett Laboratory, Imperial College, Prince Consort Road, London, SW7 2AZ, UK}\\ \vspace{5pt}
${}^{b}${Perimeter Institute, 31 Caroline Street North, Waterloo ON, N2L 2Y5, Canada}
              }
              \end{small}                       \end{center}
\vskip 2cm
\begin{center}
\textbf{Abstract}
\end{center}
\begin{quote}
We give an argument for strong positivity of the decoherence functional as the correct, physical positivity condition in formulations of quantum theory based fundamentally on the path integral. We extend to infinite systems work by  Boes and Navascues that shows that the set of strongly positive quantum systems is maximal amongst sets of systems that are closed under tensor product composition. We show further that the set of strongly positive quantum systems is the unique set that is maximal amongst sets that are closed under tensor product composition.
\end{quote}

\vfill

\end{titlepage}

\tableofcontents


\section{Introduction}\label{sec:intro}

The huge breadth of Roger Penrose's work means that there are many topics that are appropriate to include in a volume celebrating his work and his 2020 Nobel Prize for Physics.  His accomplishments range from his pioneering work on global causal analysis in General Relativity that the Nobel Prize recognises, to twistor theory,  quantum foundations and other highly original work in mathematics as well as physics. This paper describes work in quantum foundations, though we believe it is also a contribution to the quest to find a theory of quantum gravity, one of Roger's longstanding interests from a time well before it became mainstream.  This paper adds a technical result to our knowledge about the foundations of the path integral approach to quantum theory, one of whose aims is to answer the question: what is the physical quantum world? Roger's approach to answering this question led to his proposal for a theory in which the wave function or state vector for a quantum system undergoes a dynamical process of collapse induced by the system's interaction with the gravitational field \cite{Penrose:1996cv}.  The path integral offers an alternative perspective in which the physical world is not a state vector or wave function at all, dynamically collapsing or not, and, though Roger has not to our knowledge entertained a path integral approach, we believe that it accords with other aspects of his seminal work --- in particular on the Lorentzian and causal structure of spacetime --- because in the path integral approach the concepts of event and (real time) history are primary,  as they are in General Relativity. 

The path integral can be thought of as the basis of an approach to quantum foundations that takes heed of relativity's lessons. Indeed, the ``fork in the road'' between a Hamiltonian based, canonical approach and a Lagrangian based, path-integral, relativity-friendly approach 
to quantum theory was recognised by Paul Dirac in the early days of quantum mechanics in the paper
\textit{The Lagrangian in Quantum Mechanics} \cite{Dirac:1933}. 
Richard Feynman developed the path integral further  \cite{Feynman:1948} and promoted a way of thinking about quantum theory in which events and histories are central \cite{Feynman:1965, FeynmanQED}. 
In more recent times,  the path integral approach to quantum foundations has been taken up as part of the quest for a solution to the problem of quantum gravity, exactly because the central concepts of event and history in the path integral approach align with those of relativity and because the approach naturally accommodates events involving topology change such as the creation of the universe from nothing and the pair production of black holes. Moreover, in the specific case of  the causal set programme for quantum gravity\footnote{Roger's work on global causal analysis is part of the foundations of causal set theory because it tells us how much information is encoded in the spacetime causal order: in particular, the causal order determines the chronological order \cite{Kronheimer:1967} and the chronological order determines the topology in strongly causal spacetimes \cite{Penrosebook:1972}.}  the characteristic kind of spatio-temporal discreteness of a causal set ``militates
strongly against any dynamics resting on the idea of Hamiltonian evolution'' \cite{Dowker:2010ng} and 
practically demands a histories-based treatment.

The two most developed path integral approaches to quantum foundations are the closely related programmes of  generalised quantum mechanics (GQM) proposed and championed by Jim Hartle \cite{Hartle:1992as, Hartle:1998yg, Hartle:2006nx} 
and quantum measure theory  (QMT) proposed and championed by Rafael Sorkin \cite{Sorkin:1994dt, Sorkin:1995nj, Sorkin:2006wq, Sorkin:2007uc, Sorkin:2010kg}.  In this work we address a question about the axioms of path integral based approaches:  what positivity condition should the decoherence functional --- also called the double path integral in the quantum measure theory (QMT)  literature ---  satisfy? The result we prove is applicable both to QMT and GQM because the decoherence functional is a fundamental entity in both. We will discuss how QMT and GQM diverge from each other  in Section \ref{discussion}. 

Within existing unitary quantum theories,  the decoherence functional is, essentially,  the Gram matrix of inner products of a set of appropriate vectors in an appropropriate vector space and therefore such decoherence functionals satisfy a positivity condition known in the literature as strong positivity  which is, essentially, the condition that the decoherence functional is a positive matrix.  
Conversely,  if one starts --- as one does in QMT and GQM --- with a decoherence functional on an algebra of events as the axiomatic basis for the physics of a quantum system, the condition of strong positivity, if adopted as one of the axioms, allows a Hilbert space to be  be constructed, using  that decoherence functional to define an inner product on the free vector space on the event algebra,  which can then be completed \cite{Martin:2004xi}. This  derived \textit{history Hilbert space} can be shown to equal  the standard Hilbert space in non-relativistic quantum mechanics and in finite unitary systems in a physically meaningful sense  \cite{Dowker:2010ng}  and it is conjectured that this is the case in other unitary quantum theories such as quantum field theory. 
The history Hilbert space   
has  been used to imply the Tsirel'son bound for scenarios of experimental 
probabilities in quantum measure theory \cite{Craig:2006ny} and, more generally, to locate scenarios that admit strongly positive decoherence functionals within the NPA hierarchy of semi definite programs \cite{Navascues:2008, Dowker:2013tva}.  The history Hilbert space 
also provides a complex \textit{vector measure} on events, providing an additional toolkit for exploring the question of the
extension of the quantum measure \cite{Sorkin:2012xx, Dowker:2010qh, Surya:2011}. 

These might be reasons enough to adopt strong positivity as the appropriate positivity axiom for the decoherence functional. We certainly want to be able to recover the standard Hilbert space machinery in familiar cases like quantum mechanics. But it's not a fully conclusive argument because we do not know if a Hilbert space is a necessary structure in a path-integral based theory of quantum gravity. In \cite{Boes:2016ihr} Boes and Navascues give an argument for strong positivity based on composability of finite, noninteracting, uncorrelated systems. They show that the class of finite strongly positive systems satisfies a well defined maximality condition: no other system can be added to the set without 
the set losing the property of being closed under tensor product composition. In this work we will extend their result to infinite systems and further show that the set of strongly positive systems is the unique set of quantum 
systems that is maximal and closed under composition.  

\section{Quantum measure theory: a histories-based framework}  \label{HistoriesIntro}

We will work within the  formalism and use the terminology of quantum measure theory. 
Our results will,
however, apply to generalised quantum mechanics (GQM) because they are technical results about decoherence functionals which are also fundamental in GQM. We will review the basic concepts of QMT below and refer readers to  \cite{Sorkin:1994dt, Sorkin:1995nj, Sorkin:2006wq, Sorkin:2007uc, Sorkin:2010kg} for more details. 

\subsection{Event Algebra}

In quantum measure theory, the kinematics of a physical, quantum system is given by 
the set $\Omega$ of \emph{histories} over which the path integral is done.
Each history $\gamma$ in $\Omega$ is as complete a
description of physical reality as is
conceivable in the theory. 
For example, in $n$-particle quantum mechanics, a history is a set of $n$
trajectories in spacetime and in a scalar field theory, a history is a real or complex
function on spacetime. This is not to say that even in these relatively well-known
cases $\Omega$ is easy to determine: work must be done to 
determine for example if the trajectories/fields are continuous or 
discontinuous and by what measure, \textit{etc}. Nevertheless, the concept of the path integral is familiar enough for us to take $\Omega$ as the underlying context for the concept of a quantum system. 

Any physical proposition about
the system corresponds to a subset of $\Omega$ in the obvious way. For example,
in the case of the non-relativistic particle, if $R$ is a region of
space and $\Delta T$ a time interval, the proposition ``the particle is in $R$ during
$\Delta T$'' corresponds to the set of all trajectories that pass through
$R$ during  $\Delta T$.  We adopt the standard terminology of stochastic
processes in which such subsets of $\Omega$ are called \emph{events}. 

An \emph{event algebra} on a sample space $\Omega$ is a non-empty
collection, $\EA$, of subsets of $\Omega$ such that
\begin{enumerate}
\item  $\Omega\setminus \alpha \in
\EA$ for all $\alpha \in \EA$ (closure under complementation), 
\item $\alpha \cup \beta \in
\EA$ for all  $\alpha, \beta \in \EA $ (closure under finite union). 
\end{enumerate}
It follows from the definition that $\emptyset \in \EA$, $\Omega \in \EA$
and
$\EA$ is closed under finite intersections. An event algebra is an \textit{algebra of sets} by a standard definition,
 and a \textit{Boolean algebra}. For events qua propositions about the system, the set operations correspond  to logical combinations of propositions in the usual way: 
union is  ``inclusive or'', intersection is ``and'', complementation is ``not'' \textit{etc.} 

An event algebra $\EA$ is also an algebra in the sense of a vector space over a set of scalars, 
$\mathbb{Z}_2$, 
with intersection as multiplication and symmetric difference as addition:
\begin{itemize}
\item[] $\alpha \cdot \beta := \alpha \cap \beta$, for all $\alpha,  \beta \in \EA$;
\item[] $\alpha + \beta := (\alpha \setminus \beta) \cup (\beta \setminus
\alpha)$, for all $\alpha, \beta \in \EA$.
\end{itemize}
In this algebra, the unit element, $1 \in \EA$, is $1: = \Omega$ and the
zero element, $0\in \EA$, is $0:= \emptyset$. This ``arithmetric'' way of
expressing set algebraic formulae is very convenient and 
we have, for example, that $1+A$ is the complement of $A$ in $\Omega$. 

If an event algebra $\EA$ is also closed under countable unions then
$\EA$ is a $\sigma${\emph{-algebra}} but we will not assume this extra condition 
on the event algebra. 

\subsection{Decoherence functional and quantal measure} \label{DecoherenceFunctional}

A  \emph{decoherence functional} on an event algebra $\EA$ is a map $D
: \EA \times \EA \to \mathbb{C}$ such that,
\begin{enumerate}
\item $D(\alpha,\beta) =
D(\beta,\alpha)^*$ for all $\alpha, \beta \in \EA$ (\emph{Hermiticity});
\item $D(\alpha, \beta \cup \gamma) = D(\alpha,
\beta) + D(\alpha,\gamma)$ for all $\alpha, \beta, \gamma \in \EA$ with $\beta \cap \gamma
= \varnothing$  (\emph{Additivity});
\item $D(\Omega, \Omega)=1$ (\emph{Normalisation});
\item 
$D(\alpha,\alpha) \geq 0$ for all $\alpha \in \EA$ (\emph{Weak Positivity}).
\end{enumerate}

\noindent A \emph{quantal measure} on an event algebra $\EA$ is a map $\mu: \EA
\to \mathbb{R}$ such that,
\begin{enumerate}
\item  $\mu(\alpha) \geq 0$ for all $\alpha \in \EA$
(\emph{Positivity});
 \item $ \mu(\alpha \cup \beta \cup \gamma) - \mu(\alpha \cup \beta) -
\mu(\beta \cup \gamma) - \mu(\alpha \cup \gamma) +\mu(\alpha) +
\mu(\beta) + \mu(\gamma) = 0$, 
for all pairwise disjoint $\alpha, \beta, \gamma \in \EA$ ({\emph{Quantal Sum Rule}});
\item $\mu(\Omega)=1$ (\emph{Normalisation}).
\end{enumerate}

If $D : \EA \times \EA \to \mathbb{C}$ is a decoherence functional
then the map $\mu : \EA \to \mathbb{R}$ defined by
$\mu(\alpha):=D(\alpha,\alpha)$ is a quantal measure. And, conversely, if $\mu$ is a quantal measure
on $\EA$  then there exists (a non-unique) decoherence functional $D$ such that 
$\mu(\alpha)=D(\alpha,\alpha)$ \cite{Sorkin:1994dt}. The relationship between the quantal measure and 
the decoherence functional and their physical significance --- including the question of which is the more primitive concept --- remain to be fully worked out. In this paper we focus on the decoherence functional. A triple, $(\Omega, \EA, D)$, of sample space, event algebra and
decoherence functional will be called  a {\emph{quantum measure system}}, or 
just  {\emph{system}}  for short in what follows. 

We will also need the more general concept of a \textit{quasi-system} which 
we define to be a triple $(\Omega, \EA, f)$, of sample space, event algebra and, what we will call, a 
functional $f: \EA \times \EA \to \mathbb{C}$ that satisfies the conditions (1)---(3) in the above definition of 
a decoherence functional but is not necessarily weakly positive.

We call the set of quasi-systems $\Herm$ and the set of systems $\Weak$.

A comment is in order here about why weak positivity of the decoherence functional is a requirement for a
 physical system. Weak positivity is equivalent to the requirement that the quantum measure, $\mu(\alpha):=D(\alpha,\alpha)$
 take only real, non-negative values.  In the development of our 
 understanding of the quantum measure, the predictive ``law of preclusion'' \cite{Geroch:1984, Sorkin:2007uc}  that 
  events of zero, or of very small measure, are precluded from happening plays an important role. This preclusion law only  makes  sense if the measure is non-negative, since, otherwise, certain events would have lower measure than the zero-measure events. The positivity axiom for generalised quantum mechanics (GQM) is weak positivity of the decoherence functional (see for example condition (ii) on page 32 of \cite{Hartle:2006nx} and equation (2.25a) of \cite{PhysRevD.46.1580}.

\section{Composition} \label{section_composition}

We want to describe a system that is composed of two non-interacting, uncorrelated subsystems. For reasons that will become clear, we define composition at the level of 
quasi-systems. 
Consider two quasi-systems $\QuasiSystem[1]$ and $\QuasiSystem[2]$ that together form a composite quasi-system  $(\Omega, \EA, f)$, which we write
\begin{align*}
(\Omega, \EA, f) = \QuasiSystem[1] \odot \QuasiSystem[2]  = (\HS[1]\odot \HS[2], \EA_1 \odot \EA_2, f_1 \odot f_2)\,,
\end{align*}
where the individual components of the composite triple, $ \HS[1]\odot \HS[2]$, $\EA_1 \odot \EA_2$  and $ f_1 \odot f_2$ are defined below. 

First, we take the composite history space to be the Cartesian product: $\HS[1] \odot  \HS[2]: = \HS[1] \times \HS[2]$. To construct the composite event algebra $\EA$, first consider product events of the form ``$E_1 \in \EA[1]$ for quasi-system $1$ and $E_2 \in \EA[2]$ for quasi-system $2$'',  given by the Cartesian product $E_1 \times E_2$. These product events must be in the composite event algebra, and we define 
$\EA= \EA_1 \odot \EA_2$, to be the event algebra generated by the set of product events \textit{i.e.} $\EA$  is the smallest event algebra that contains all the product events. One can show that
$\EA$ equals the set of finite disjoint unions\footnote{A note of clarification: by a ``disjoint union'', here and throughout the paper, we mean a union of a collection of sets that are pairwise disjoint. The symbol $\sqcup$ denotes disjoint union, \textit{i.e.} it implies that the 
sets whose union is being taken are pairwise disjoint.} of product events. This is standard but we will go through it.

Let us define $\tilde{\EA}$ to be the set of finite disjoint unions of product events. Then $\tilde{\EA}\subseteq \EA$.
All we need to show therefore is that $\tilde{\EA}$ is an algebra. 

\begin{lemma} (Closure under union)
$X \cup Y \in \tilde{\EA}$  for all $X,Y \in \tilde{\EA}$. 
\end{lemma}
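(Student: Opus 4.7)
The plan is to reduce the union of two elements of $\tilde{\EA}$ to a disjoint union using the identity
\[
X \cup Y = (X \cap Y) \sqcup (X \setminus Y) \sqcup (Y \setminus X),
\]
and then verify that each of the three pieces on the right is itself a finite disjoint union of product events. The punchline is then that concatenating the three disjoint product-event decompositions produces a single disjoint decomposition of $X\cup Y$, placing it in $\tilde{\EA}$.

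First I would establish two elementary facts at the level of individual product events in $\Omega_1 \times \Omega_2$. For intersection,
\[
(A_1 \times B_1) \cap (A_2 \times B_2) = (A_1 \cap A_2) \times (B_1 \cap B_2),
\]
which is a product event since $\EA_1$ and $\EA_2$ are closed under intersection. For set difference, splitting on whether the first coordinate lies in $A_2$,
\[
(A_1 \times B_1) \setminus (A_2 \times B_2) = \bigl((A_1 \setminus A_2) \times B_1\bigr) \sqcup \bigl((A_1 \cap A_2) \times (B_1 \setminus B_2)\bigr),
\]
which is a disjoint union of two product events (here I use that $\EA_1, \EA_2$ are closed under complementation and hence under set difference).

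Next, for $X = \bigsqcup_{i=1}^m P_i$ and $Y = \bigsqcup_{j=1}^n Q_j$ with each $P_i, Q_j$ a product event, I would check closure of $\tilde{\EA}$ under intersection and difference. For intersection, $X \cap Y = \bigsqcup_{i,j} (P_i \cap Q_j)$; disjointness is preserved because the $P_i$'s (resp.\ $Q_j$'s) are pairwise disjoint, and each $P_i \cap Q_j$ is a product event by the first fact. For difference, $X \setminus Y = \bigsqcup_i (P_i \setminus Y)$, which remains disjoint because the $P_i$'s are, and each $P_i \setminus Y = \bigcap_{j=1}^n (P_i \setminus Q_j)$ is an $n$-fold intersection of disjoint unions of two product events (by the second fact), hence is in $\tilde{\EA}$ by repeated use of the intersection case already handled.

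Finally, since $X \cap Y$, $X \setminus Y$, $Y \setminus X$ each lie in $\tilde{\EA}$ and are pairwise disjoint, concatenating their individual disjoint decompositions yields $X \cup Y$ as a finite disjoint union of product events, i.e.\ $X \cup Y \in \tilde{\EA}$. The main obstacle is purely organisational: keeping track of disjointness while distributing intersection over disjoint union iteratively in the difference step; the saving observation is simply that if a collection is pairwise disjoint then so is any collection obtained by intersecting each member with a common set.
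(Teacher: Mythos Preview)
Your proof is correct, but it takes a different route from the paper. The paper observes that a finite (not necessarily disjoint) union $Z=\bigcup_i V_i\times W_i$ of product events lies in the finite Boolean algebra $\EA_V\otimes\EA_W$, where $\EA_V\subseteq\EA_1$ and $\EA_W\subseteq\EA_2$ are the subalgebras generated by the $V_i$'s and $W_i$'s respectively; since every element of a finite Boolean algebra is a disjoint union of its atoms, and the atoms here are products $v_k\times w_l$, the result follows in one stroke. Your argument instead works entirely with explicit set-theoretic identities for products---the intersection formula and the two-piece splitting of $(A_1\times B_1)\setminus(A_2\times B_2)$---and bootstraps closure under union from closure under intersection and set difference. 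This is the standard ``rectangles form a semiring'' computation from measure theory; it is more elementary in that it never invokes the atomic structure of finite Boolean algebras, and as a bonus it essentially contains the proof of Lemma~2 (closure under complementation) since $\Omega\setminus X$ is a special case of your difference step. The paper's atom-based approach, on the other hand, is reused verbatim in Lemma~2 and again when checking that the composed functional $f$ is well defined, so it pays for itself several times over in the surrounding development.
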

\begin{proof}
Let $X$ and $Y$ be elements of $\tilde{\EA}$. They are finite disjoint unions of product events and so 
their union is a finite union of product events. Thus,  if we show that any event $Z \in \EA$ of the form 
\begin{equation}
		Z = \bigcup_{i=1}^{n} {V}_i \times {W}_i\,,
		\end{equation}
where $V_i \in \EA_1$ and $W_i \in \EA_2$ for all $i$, equals a finite union of 
pairwise disjoint product events then we are done. 

Consider the algebra $\EA_V\subseteq \EA_1$ generated by $\{ {V}_1, \dots {V}_n \} $. It is a finite Boolean algebra. Let the atoms of this algebra be 
$\{v_1, \dots v_N\}$. Similarly consider the algebra $\EA_W \in \EA_2$ generated by $\{ {W}_1, \dots {W}_n \} $. Let the atoms of this algebra be 
$\{w_1, \dots w_M\}$. The atoms of the product algebra $\EA_V \otimes \EA_W$ are of the form 
$v_k \times w_l$. The event $Z$ is an element of $\EA_V \otimes \EA_W$ and so it is a unique disjoint union of
atoms of the form $v_k \times w_l$. 
\end{proof}

\begin{lemma} (Closure under complementation)
$1+ X$ is an element of $\tilde{\EA}$  for all $X \in \tilde{\EA}$. 
\end{lemma}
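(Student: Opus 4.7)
My plan is to recycle the atom-decomposition trick from the previous lemma. Write $X = \bigsqcup_{i=1}^{n} V_i \times W_i$ with $V_i \in \EA_1$, $W_i \in \EA_2$. As in the proof of closure under union, let $\EA_V \subseteq \EA_1$ be the (finite) Boolean algebra generated by $\{V_1,\dots,V_n\}$, with atoms $\{v_1,\dots,v_N\}$, and let $\EA_W \subseteq \EA_2$ be the Boolean algebra generated by $\{W_1,\dots,W_n\}$, with atoms $\{w_1,\dots,w_M\}$. Since these are genuine Boolean sub-algebras (containing $\Omega_1$ and $\Omega_2$ as their units), the atoms $\{v_k\}$ partition $\Omega_1$ and $\{w_\ell\}$ partition $\Omega_2$; consequently the rectangles $\{v_k \times w_\ell\}$ partition $\Omega = \Omega_1 \times \Omega_2$.

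Next I would observe that $X$, being an element of $\EA_V \otimes \EA_W$, is uniquely a disjoint union of some sub-collection $\{v_k \times w_\ell : (k,\ell) \in S\}$ of these atomic rectangles. Then the complement is simply the disjoint union over the complementary index set,
\begin{equation}
1 + X \;=\; \Omega \setminus X \;=\; \bigsqcup_{(k,\ell)\notin S} v_k \times w_\ell.
\end{equation}
Because each $v_k \in \EA_1$ and each $w_\ell \in \EA_2$, every $v_k \times w_\ell$ is a product event, and the right-hand side is a finite disjoint union of product events, hence an element of $\tilde{\EA}$.

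I expect no real obstacle: the hard work was done in the previous lemma, where the finite atom decomposition of $\EA_V \otimes \EA_W$ was set up. The only point that needs a careful sentence is the remark that the atoms of $\EA_V$ genuinely partition all of $\Omega_1$ (not merely the union $\bigcup_i V_i$), which is why the complement inside $\Omega$ can be written using atoms of the same form. Once that is noted, the proof is essentially a one-line consequence: complementation inside a finite Boolean algebra is union of the remaining atoms. An alternative route would be to prove closure under intersection first and then induct on $n$ using the single-rectangle identity $\Omega \setminus (V \times W) = (V^c \times W) \sqcup (V \times W^c) \sqcup (V^c \times W^c)$, but the atom argument is shorter and stylistically consistent with the preceding lemma.
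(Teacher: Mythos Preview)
Your proposal is correct and follows essentially the same route as the paper: both place $X$ inside the finite product algebra $\EA_V \otimes \EA_W$ constructed in the preceding lemma and then observe that the complement $1+X$ is a disjoint union of the remaining product atoms $v_k \times w_\ell$. Your version is in fact slightly more explicit, spelling out that the atoms of $\EA_V$ partition all of $\Omega_1$ (and similarly for $\EA_W$), which is exactly the point needed to ensure that $1 = \Omega_1 \times \Omega_2$ lies in $\EA_V \otimes \EA_W$.
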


\begin{proof}
If $X$ be an element of $ \tilde{\EA}$ then it is a disjoint union of product events:
\begin{equation}
X = \bigsqcup_{i=1}^{n} {V}_i \times {W}_i \,.
\end{equation}
So $X$ is an element of the product algebra $\EA_V \otimes \EA_W$ as constructed in the proof of the previous lemma. 
$1+ X $ is also an element of $\EA_V \otimes \EA_W$ and so it is a disjoint union of the product atoms of $\EA_V \otimes \EA_W$. 
\end{proof}

\begin{corollary} $\EA = \tilde\EA$. 
\end{corollary}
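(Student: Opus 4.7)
The plan is to combine the two preceding lemmas with the minimality built into the definition of $\EA = \EA_1 \odot \EA_2$. Recall that $\EA$ was defined as the smallest event algebra containing every product event $E_1 \times E_2$ with $E_i \in \EA_i$, while $\tilde\EA$ was defined as the set of finite disjoint unions of product events. The inclusion $\tilde\EA \subseteq \EA$ has already been noted, so I only need to argue the reverse inclusion, and this will follow immediately once I show that $\tilde\EA$ is itself an event algebra containing all product events.

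First I would observe that $\tilde\EA$ is non-empty and in fact contains every product event, since a single product event is trivially a (one-term) disjoint union of product events. In particular, $\Omega = \Omega_1 \times \Omega_2 \in \tilde\EA$, so $\tilde\EA$ is a non-empty collection of subsets of $\Omega$. Next, closure of $\tilde\EA$ under finite union is exactly the content of the first lemma, and closure under complementation (in $\Omega$) is exactly the content of the second lemma, where I am using the arithmetic convention $1 + X = \Omega \setminus X$ introduced earlier in the paper. These are precisely the two defining properties of an event algebra, so $\tilde\EA$ is an event algebra.

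From here the argument is essentially one line: $\tilde\EA$ is an event algebra containing all product events, and $\EA$ is by definition the smallest such algebra, so $\EA \subseteq \tilde\EA$. Combined with $\tilde\EA \subseteq \EA$ this yields $\EA = \tilde\EA$, which is the corollary. I do not anticipate any genuine obstacle here; the real work was done in the two lemmas above, where one had to pass from an arbitrary (not necessarily disjoint) union of rectangles to a disjoint refinement via the atoms of the finite Boolean subalgebras generated by the factors. The corollary is then just the packaging of those two closure properties with the universal property of the generated algebra.
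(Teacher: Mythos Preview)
Your proposal is correct and follows exactly the approach the paper intends: the paper states the corollary without an explicit proof, having already noted $\tilde\EA \subseteq \EA$ and set up the two lemmas precisely so that $\tilde\EA$ is an event algebra, whence the minimality of $\EA$ gives the reverse inclusion. Your write-up simply spells out this implicit one-line deduction.
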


If one thinks of the event algebras as algebras qua vector spaces over   $\mathbb{Z}_2$
 then one sees that  $\EA$ is the tensor product $ \EA_1 \otimes \EA_2$.
 
Finally, we define the composed functional $f$ following \cite{Martin:2004xi} and \cite{Boes:2016ihr}. We assume that the two subsystems do not interact and are uncorrelated. In a classical measure theory the probability of the product event of two independent events is $\Prob_1(E_1) \Prob_2(E_2)$, where $\Prob_1$ and $\Prob_2$ are the probability measures for system $1$ and $2$, respectively. By analogy, we define $f = f_1\odot f_2$ for product events:
\begin{equation}
	f(A_1 \times A_2, B_1 \times B_2) := f_1(A_1,B_1)\, f_2(A_2,B_2) \,. \label{comp_rule_basic}
\end{equation}
One might want to consider other ways to combine $f_1$ and $f_2$ but note that 
if the probability measures $\Prob_1$ and $\Prob_2$ were expressed as two diagonal decoherence functionals $D_1$ and $D_2$, then this composition rule reproduces the classical composition rule. Moreover, such a composition rule is observed for decoherence functionals constructed in ordinary quantum mechanics when the initial state of the combined system is a product state.

The functional $f$ is extended to the rest of $\EA = \EA_1 \otimes \EA_2$ by linearity. Consider two arbitrary elements
of $\EA$, 
\begin{equation}
		A = \bigsqcup_{i=1}^{n_A} {A_1}_i \times {A_2}_i \quad \textrm{and} \quad B = \bigsqcup_{j=1}^{n_B} {B_1}_j \times {B_2}_j \,, 
\end{equation}
where the notation $\sqcup$ indicates that the sets over which the union is taken are pairwise disjoint. 

We extend $f$ to these events: 
\begin{equation}\label{eq:dab}
	f(A,B) := \sum_{i=1}^{n_A} \sum_{j=1}^{n_B} f_1({A_1}_i,{B_1}_j)\, f_2({A_2}_i, {B_2}_j)\,, 
\end{equation}
where we must check that this 
is independent of the expansions of $A$ and $B$ as disjoint unions of products. 

Consider therefore different expansions of $A$ and $B$ as disjoint unions:
\begin{equation}
		A = \bigsqcup_{k=1}^{m_A} {\alpha_1}_k \times {\alpha_2}_k \quad \textrm{and} \quad B = \bigsqcup_{l=1}^{m_B} {\beta_1}_l \times {\beta_2}_l \,.
\end{equation}
Let $\EA_{1A}$ be the event algebra generated by the events $\{{A_1}_1, {A_1}_2, \dots {A_1}_{n_A}\} \cup 
\{{\alpha_1}_1, {\alpha_1}_2, \dots {\alpha_1}_{m_A}\}$. Let $\EA_{2A}$ be the event algebra generated by the events $\{{A_2}_1, {A_2}_2, \dots {A_2}_{n_A}\} \cup 
\{{\alpha_2}_1, {\alpha_2}_2, \dots {\alpha_2}_{m_A}\}$.

Now, $A$ is an element of $\EA_{1A} \otimes \EA_{2A}$  and has a unique 
expansion as a disjoint union of atoms of this algebra. Each of these atoms is a product of an atom of $\EA_{1A}$
and an atom of $\EA_{2A}$. 

We can go through a similar procedure for $B$, defining algebras $\EA_{1B}$,  $\EA_{2B}$, and  $\EA_{1B} \otimes \EA_{2B}$ and their atoms. 

Then, starting with $f(A,B)$ as defined by equation (\ref{eq:dab}), 
and using the additivity of $f_1$ and $f_2$ separately, 
we can re-express this as a unique double sum over the atoms of $\EA_{1A} \otimes \EA_{2A}$ and over the atoms of 
$\EA_{1B} \otimes \EA_{2B}$. Then, again using the additivity of $f_1$ and $f_2$, 
those atoms can be recombined to form the events ${\alpha_1}_j \times {\alpha_2}_j$ and 
$ {\beta_1}_j \times {\beta_2}_j $ to show that 
\begin{equation}\label{eq:dabb}
  \sum_{i=1}^{n_A} \sum_{j=1}^{n_B} f_1({A_1}_i,{B_1}_j)\, f_2({A_2}_i, {B_2}_j) = 
   \sum_{k=1}^{m_A} \sum_{l=1}^{m_B} f_1({\alpha_1}_k,{\beta_1}_l)\, f_2({\alpha_2}_k, {\beta_2}_l) 
  \,,
\end{equation}
so  $f$ is well-defined. This completes our definition of the composition of quasi-systems.

We have defined composition for quasi-systems because, it turns out, composition does not 
preserve weak positivity: the composition of two systems may not be a system. 
For example, consider two finite systems each with only two histories: 
$\Omega_1 = \{\gamma^{(1)}_1, \gamma^{(2)}_1\} $ and 
$\Omega_2 = \{\gamma^{(1)}_2, \gamma^{(1)}_2\} $.
For each system, the atomic events are the singleton sets with one element. Consider, for each system the set of atomic events and let the respective $2\times 2$ matrices  $M$ and $N$ be
\begin{align*}
	M_{ij} : = &D_1(\set{\gamma^{(i)}_1}, \set{\gamma^{(j)}_1}) \, ,\\
         N_{ij} : = &D_2(\set{\gamma^{(i)}_2}, \set{\gamma^{(j)}_2 }) \, , 
	\end{align*}
and have entries
\begin{equation}\label{weakexample}
	M = \begin{pmatrix}
		2 & -1
		\\ -1 & 1
	\end{pmatrix}
	\quad \textrm{and} \quad N= \frac{1}{5} \begin{pmatrix}
		1 & 2
		\\ 2 & 0
	\end{pmatrix} \,.
\end{equation}

Consider now the composed event  $E : = E_{1} + E_{2}$ where
 $E_{1}:=\set{(\gamma^{(1)}_1, \gamma^{(1)}_2)}$ and 
$E_{2}:=\set{(\gamma^{(2)}_1, \gamma^{(2)}_2)}$. We have the composed functional
\begin{align*}
& D_1\odot D_2 (E, E)  \\
& =  D_1\odot D_2(E_{1},E_{1}) + D_1\odot D_2(E_{1},E_{2}) + D_1\odot D_2(E_{2},E_{1}) + D_1\odot D_2(E_{2},E_{2})
	\\ 
	 &= {M}_{11} {N}_{11} + {M}_{12}{N}_{12} + {M}_{21}{N}_{21} + {M}_{22}{N}_{22}
	\\ &= \frac{1}{5} (2 - 2 -2 + 0)
	\\ &= -\frac{2}{5}\,.
\end{align*}
$D_1$ and $D_2$ are weakly positive but $D_1\odot D_2$ is not and so the set of quantum measure systems, $\Weak$, is not closed under composition. Therefore, if we require that any two physical systems must compose to form a physical system then the conclusion is that not all systems in $\Weak$ are physical. We can turn this around and impose ``membership of a class 
of systems that is closed under composition'' as a requirement to be an allowed physical system. 

\begin{definition}[Tensor-Closed]
	A subset $\SysSpace{A} \subseteq \Weak$ is tensor-closed if
	\begin{equation}
		\Sys[1] \odot \Sys[2] \in \SysSpace{A} \quad \forall \Sys[1],\Sys[2] \in \SysSpace{A}\,.
	\end{equation}
\end{definition}
We have chosen to call this property tensor-closed because the composed event algebra is the 
tensor product algebra. 

The question to investigate is then, what subsets of $\Weak$ are tensor-closed? One such subset 
has already been identified in the literature: the set of  systems with \textit{strongly positive} decoherence
functionals \cite{Martin:2004},  to which we now turn.

\section{Strong Positivity} \label{section_strong_pos}

\begin{definition}[Event Matrix]
	Given a functional $f: \EA \times \EA \to {\mathbb{C}}$ and a finite set of events $\Eset{B} \subseteq \EA$, the corresponding Hermitian event matrix $M$ is the $\vert \Eset{B} \vert \times \vert \Eset{B} \vert $ 
	square matrix, indexed by $\Eset{B}$, given by
	\begin{equation}
		M_{AB}: = f(A,B)\,, \quad A, B \in \Eset{B} \,.
	\end{equation}
\end{definition}

Using this concept of event matrix, the definition of strong positivity can be stated:
\begin{definition}[Strong Positivity]
	A decoherence functional $D:  \EA \times \EA \to {\mathbb{C}}$ is strongly positive if, for each finite $\Eset{B} \subseteq \EA$, the corresponding event matrix $M$ is positive semi-definite. 
	\end{definition}

 This condition is strictly stronger than weak positivity, indeed event matrix $M$ in (\ref{weakexample}) above is weakly positive but not positive definite. We call a system with a 
 strongly positive decoherence functional a strongly positive system and 
 denoting the set of all strongly positive systems $\Strong$, we have 
\begin{align*}
\Strong \subset \Weak \subset \Herm\,.
\end{align*} 

We will prove that  $\Strong$ is tensor-closed using the following lemma:
\begin{lemma} \label{sum_of_strong_pos_events_is_strong_pos}
	Consider a system $\System$ and the finite set of events $\Eset{B} \subseteq \EA$ with event matrix $M$. If there exists a finite set of events $\Eset{B}'\subseteq \EA$ such that  the event matrix $M'$ of $\Eset{B}'$ is positive semi-definite and 
 every event in $\Eset{B}$ is a finite disjoint union of events in $\Eset{B}'$, then the event matrix $M$ of $\Eset{B}$ is also positive semi-definite.
\end{lemma}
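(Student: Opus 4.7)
The plan is to exhibit an explicit real rectangular matrix $C$ that realises the event matrix $M$ as a congruence of $M'$, and then read off positive semi-definiteness of $M$ from that of $M'$.

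First I would write down, for each $A\in\Eset{B}$, a fixed choice of subset $S_A\subseteq\Eset{B}'$ such that $A=\bigsqcup_{B\in S_A} B$, using the hypothesis that every event in $\Eset{B}$ is a finite disjoint union of events in $\Eset{B}'$. Then I would define the $|\Eset{B}|\times|\Eset{B}'|$ real $\{0,1\}$-matrix $C$ by $C_{AB}=1$ if $B\in S_A$ and $C_{AB}=0$ otherwise.

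Next I would use the additivity of the decoherence functional to evaluate $M_{AA'}=D(A,A')$. Additivity in the second argument (property 2), together with Hermiticity, gives additivity in the first argument as well, so an induction on the number of disjoint parts yields
\begin{equation*}
M_{AA'}=D\!\left(\bigsqcup_{B\in S_A}B,\ \bigsqcup_{B'\in S_{A'}}B'\right)=\sum_{B\in\Eset{B}'}\sum_{B'\in\Eset{B}'}C_{AB}\,M'_{BB'}\,C_{A'B'},
\end{equation*}
which is exactly the matrix identity $M=CM'C^{T}$.

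Finally I would conclude: for any $v\in\mathbb{C}^{|\Eset{B}|}$, setting $u=C^{T}v$ and using that $C$ is real (so $C^{\dagger}=C^{T}$), one has $v^{\dagger}Mv=v^{\dagger}CM'C^{T}v=u^{\dagger}M'u\geq 0$ since $M'$ is positive semi-definite. Hence $M$ is positive semi-definite. The only mildly delicate step is the additivity expansion in the first slot of $D$, but this follows from Hermiticity together with property (2); everything else is linear algebra, and I do not anticipate a real obstacle.
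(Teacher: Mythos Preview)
Your proposal is correct and is essentially the same argument as the paper's proof. The paper writes out the quadratic form $v^\dagger M v$ explicitly, inserts Kronecker deltas, and identifies a vector $V$ with components $V_F=\sum_{B\in\Eset{B}}v_B\sum_j\delta_{B_jF}$ so that $v^\dagger M v=V^\dagger M' V\ge 0$; your $\{0,1\}$-matrix $C$ is exactly this Kronecker-delta object, and your identity $M=CM'C^{T}$ packages the same computation as a congruence.
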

\begin{proof}
	A similar claim can be found on page 8 of \cite{Martin:2004}  and we follow the same method of proof. By assumption, for each event $E \in \Eset{B}$ there is a number $n_{E}$ such that $E$ is a union of $n_E$ pairwise disjoint events $E_i \in \Eset{B}'$:
	\begin{equation}
		E = \bigsqcup_{i=1}^{n_E} E_i \,.
	\end{equation}
	Now, for any $v \in {\mathbb{C}}^{\vert \Eset{B} \vert}$,
	\begin{align*}
		v^\dagger M v &= \sum_{A \in \Eset{B}} \sum_{B \in \Eset{B}} {v_A}^* D(A,B)\, v_B
		\\ &= \sum_{A \in \Eset{B}} \sum_{B \in \Eset{B}} {v_A}^* v_B \sum_{i=1}^{n_A} \sum_{j=1}^{n_B} D(A_i,B_j) \BY{bi-additivity}
		\\ &= \sum_{A \in \Eset{B}} \sum_{B \in \Eset{B}} {v_A}^* v_B \sum_{i=1}^{n_A} \sum_{j=1}^{n_B} M'_{A_i B_j}
		\\ &= \sum_{A \in \Eset{B}} \sum_{B \in \Eset{B}} {v_A}^* v_B \sum_{i=1}^{n_A} \sum_{j=1}^{n_B} \sum_{E \in \Eset{B}'} \sum_{F \in \Eset{B}'} \delta_{A_i E}\, \delta_{B_j F}\, M'_{EF}
		\\ &= \sum_{E \in \Eset{B}'} \sum_{F \in \Eset{B}'} \Arg[\Bigg]{\sum_{A \in \Eset{B}} {v_A}^* \sum_{i=1}^{n_A} \delta_{A_i E}} M'_{EF} \Arg[\Bigg]{\sum_{B \in \Eset{B}} v_B \sum_{j=1}^{n_B} \delta_{B_j F}}
		\\ & = \sum_{E \in \Eset{B}'} \sum_{F \in \Eset{B}'} V_E^* M'_{EF} V_F\,,
	\end{align*}
where  vector $V \in {\mathbb{C}}^{\vert \Eset{B}'\vert}$ and its components are $V_F := \Arg[\Bigg]{\sum_{B \in \Eset{B}} v_B \sum_{j=1}^{n_B} \delta_{B_j F}}$.  $M'$ is positive semi-definite and so
$V^\dagger M' V \ge 0$. Hence the result. 
\end{proof}

\begin{corollary}Strong positivity is preserved under finite coarse-graining.\end{corollary}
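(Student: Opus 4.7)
The plan is to derive the corollary as an essentially immediate consequence of Lemma \ref{sum_of_strong_pos_events_is_strong_pos}. First I would unpack what a finite coarse-graining is in this setting: starting from a finite set of pairwise disjoint ``fine-grained'' events $\Eset{B}' \subseteq \EA$, a finite coarse-graining produces a finite set of events $\Eset{B} \subseteq \EA$ each of whose members is a finite disjoint union of elements of $\Eset{B}'$. With that fixed, the thing to check is that the event matrix $M$ associated to $\Eset{B}$ is positive semi-definite whenever $D$ is strongly positive.

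By strong positivity of $D$ applied to the finite set $\Eset{B}'$, the event matrix $M'$ associated to $\Eset{B}'$ is already positive semi-definite. The hypotheses of Lemma \ref{sum_of_strong_pos_events_is_strong_pos} therefore apply verbatim: we have a finite set $\Eset{B}'$ whose event matrix $M'$ is positive semi-definite, and every element of $\Eset{B}$ is a finite disjoint union of elements of $\Eset{B}'$. The lemma then delivers positive semi-definiteness of $M$, which is exactly the statement that the coarse-grained event matrix is positive semi-definite. Since the choices of $\Eset{B}'$ and of its coarse-graining $\Eset{B}$ were arbitrary, every finite event matrix obtained by finite coarse-graining is positive semi-definite, which is precisely the sense in which strong positivity is preserved.

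There is essentially no obstacle here: all of the analytical content has already been extracted into the preceding lemma, and the corollary is just its physical repackaging. The one minor item I would flag in passing is the interpretation of the phrase ``finite coarse-graining''; one could instead read it as passing to a finite sub-algebra of $\EA$ generated by a partition, but that reading also reduces to the lemma, because the coarse-grained events still live inside $\EA$ and the restriction of $D$ to them manifestly preserves Hermiticity and bi-additivity. So in either reading, the proof is a direct invocation of Lemma \ref{sum_of_strong_pos_events_is_strong_pos}.
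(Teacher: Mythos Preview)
Your proposal is correct and matches the paper's approach: the corollary is stated immediately after Lemma~\ref{sum_of_strong_pos_events_is_strong_pos} with no separate proof, precisely because it is an immediate application of that lemma, exactly as you argue.
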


\begin{theorem} \label{strong_pos_composition}
	If $\Sys[1]=\System[1]$ and $\Sys[2]=\System[2]$ are strongly positive systems, then  $\Sys[1] \odot \Sys[2]$ is a strongly positive system.
\end{theorem}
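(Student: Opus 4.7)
The plan is to reduce strong positivity of the composed system $\Sys[1] \odot \Sys[2]$ to strong positivity of $\Sys[1]$ and $\Sys[2]$ by invoking \lemref{sum_of_strong_pos_events_is_strong_pos}. Given an arbitrary finite set of events $\Eset{B} \subseteq \EA_1 \odot \EA_2$, the goal is to exhibit a finer finite set $\Eset{B}'$ consisting solely of product events such that each element of $\Eset{B}$ is a disjoint union of elements of $\Eset{B}'$ and the event matrix $M'$ of $\Eset{B}'$ is positive semi-definite. The lemma will then supply positive semi-definiteness of the event matrix $M$ of $\Eset{B}$, and since $\Eset{B}$ was arbitrary, strong positivity of $D_1 \odot D_2$ will follow.

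To construct $\Eset{B}'$, I would expand each $A \in \Eset{B}$ as a finite disjoint union $A = \bigsqcup_i V^A_i \times W^A_i$ of product events, and then let $\EA_1^\star \subseteq \EA_1$ and $\EA_2^\star \subseteq \EA_2$ be the finite Boolean subalgebras generated by all left and right factors respectively appearing in these expansions across all $A \in \Eset{B}$. Writing $\{v_1, \dots, v_N\}$ and $\{w_1, \dots, w_M\}$ for the atoms of $\EA_1^\star$ and $\EA_2^\star$, the argument used in the first lemma of \secref{section_composition} shows that each $A \in \Eset{B}$ is a unique disjoint union of product atoms $v_k \times w_l$. I would then set $\Eset{B}' := \{v_k \times w_l : 1 \le k \le N,\ 1 \le l \le M\}$, which by construction refines $\Eset{B}$ in the required sense.

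For positive semi-definiteness of $M'$, the key observation is that by the definition of $D_1 \odot D_2$ on product events,
\begin{equation*}
M'_{(k,l),(k',l')} = (D_1\odot D_2)(v_k \times w_l,\, v_{k'}\times w_{l'}) = D_1(v_k,v_{k'})\, D_2(w_l,w_{l'}) \,,
\end{equation*}
so $M'$ coincides with the Kronecker product $M^{(1)} \otimes M^{(2)}$, where $M^{(1)}$ and $M^{(2)}$ are the event matrices of $\{v_k\}$ in $\Sys[1]$ and of $\{w_l\}$ in $\Sys[2]$ respectively. By strong positivity of $\Sys[1]$ and $\Sys[2]$ both factors are positive semi-definite, and the Kronecker product of two positive semi-definite matrices is itself positive semi-definite (its spectrum consists of the pairwise products of eigenvalues of the factors). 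Combined with \lemref{sum_of_strong_pos_events_is_strong_pos}, this yields the theorem.

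The argument is really a packaging of two facts: that any finite collection of events in the composite algebra admits a common refinement into product atoms, and that the event matrix over product atoms factorises as a Kronecker product of the component event matrices. I do not anticipate a genuine technical obstacle; the only point requiring any care is the verification that the common refinement is well-defined, which ultimately rests on the uniqueness of the atomic decomposition in a finite Boolean algebra --- and this was already established in the first lemma of \secref{section_composition}.
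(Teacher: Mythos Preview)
Your proposal is correct and follows essentially the same approach as the paper: both refine an arbitrary finite $\Eset{B}$ to the product atoms of the finite Boolean subalgebras generated by the factors appearing in the expansions, observe that the resulting event matrix is the Kronecker product of the two component event matrices (hence positive semi-definite), and then invoke \lemref{sum_of_strong_pos_events_is_strong_pos}. The only differences are notational.
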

\begin{proof}
	Consider a  set of events $\Eset{B} \subseteq \EA[1] \odot \EA[2]$ of cardinality $n$:
	$\Eset{B} = \{ X^{1}, X^2, \dots X^n\}$. 
	By the previous Lemma, if there exists a set of events $\Eset{B}' \subseteq \EA[1] \odot \EA[2]$
	with a positive definite event matrix, such that every element of $\Eset{B}$ is a disjoint union 
	of elements of $\Eset{B}'$ then we are done. 
	
	Each element of $\Eset{B}$ is a disjoint union of product events:
		\begin{align*}
		X^a =  \bigsqcup_{i=1}^{n_a} X^a_{1i }\times X^a_{2i} \,, \quad a = 1,2, \dots n\,.
	\end{align*}
Let  $\EA(\Eset{B})_1$ be the subalgebra of $\EA[1]$  generated by the set of events 
	$\{ X^a_{1i }\, | \, a = 1,2, \dots n \ \textrm{and} \ i = 1,2, \dots n_a \}$. 
Similarly let $\EA(\Eset{B})_2$ be the subalgebra of $\EA[2]$  generated by the set of events 
	$\{ X^a_{2i }\, | \, a = 1,2, \dots n \ \textrm{and} \ i = 1,2, \dots n_a \}$. 
	Consider the product algebra, $\EA(\Eset{B})_1 \otimes \EA(\Eset{B})_2$. 
	Its atoms are products of the form $a_{1i} \times a_{2j}$ where 
	$a_{1i}$, $i = 1,2,\dots m_1$ are the atoms of  $\EA(\Eset{B})_1$ 
	and $a_{2j}$, $j = 1,2, \dots m_2$ are the atoms of $\EA(\Eset{B})_2$. 
	 Let $\Eset{B}' $ denote the set of these product atoms: 
	\begin{align*}
		\Eset{B}' =  \{ a_{1i} \times a_{2j}\, |\, i = 1,2\dots m_1\ \textrm{and} \ j = 1,2,\dots m_2 \}\,.
			\end{align*} 
	 	Each event $X^a \in \Eset{B}$ is a unique finite disjoint union of elements of $\Eset{B}' $.
	The event matrix for $\Eset{B}'$ is 
	\begin{align*}
		D_1 \odot D_2 (a_{1i} \times a_{2j} \,,\, a_{1k}  \times a_{2l} ) = D_1 (a_{1i} , a_{1k} ) \, D_2 (
		a_{2j}  , a_{2l} )\,.
	\end{align*}
	This is the Kronecker product of two positive semi-definite 
	matrices, which is positive semi-definite. Hence the result. 
	\end{proof}

Thus, $\Strong$ is tensor-closed. However this condition is not sufficient to pick out $\Strong$ uniquely from amongst subsets of 
$\Weak$. 
\begin{definition}[Positive Entry Decoherence Functional]
	A decoherence functional $D: \EA \times \EA \to {\mathbb{C}}$ is a positive entry decoherence functional if, for all $A,B \in \EA$, $D(A, B)$ is real and non-negative. \end{definition}

We call a system with a positive entry decoherence functional a positive entry system. Let $\PosEntry$ denote the set of positive entry systems: $\PosEntry \subset \Weak$. The composition of two positive entry systems is a  positive entry system:
\begin{lemma} \label{pos_entry_composition}
$\PosEntry$ is tensor-closed.
\end{lemma}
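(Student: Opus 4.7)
The plan is to show directly that the composite functional $D_1 \odot D_2$ inherits the ``non-negative real entry'' property from its factors, and that this property automatically supplies the weak positivity needed to promote the composite quasi-system into a genuine system. Since the construction in Section \ref{section_composition} already furnishes $\Sys[1]\odot \Sys[2]$ with Hermiticity, Additivity and Normalisation, only these two points remain.

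First, I would take arbitrary events $A, B \in \EA[1]\odot\EA[2]$. By the results of Section \ref{section_composition} (in particular the Corollary establishing $\EA = \tilde{\EA}$), $A$ and $B$ can be written as finite disjoint unions of product events,
\begin{equation}
A = \bigsqcup_{i=1}^{n_A} A_{1i}\times A_{2i}, \qquad B = \bigsqcup_{j=1}^{n_B} B_{1j}\times B_{2j}\,,
\end{equation}
and the composite functional takes the well-defined value
\begin{equation}
(D_1\odot D_2)(A,B) = \sum_{i=1}^{n_A}\sum_{j=1}^{n_B} D_1(A_{1i},B_{1j})\, D_2(A_{2i},B_{2j})\,.
\end{equation}
By the positive entry hypothesis on $\Sys[1]$ and $\Sys[2]$, each factor $D_1(A_{1i},B_{1j})$ and $D_2(A_{2i},B_{2j})$ is a non-negative real number, so every term of the double sum is non-negative real, and hence so is the sum. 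Thus $(D_1\odot D_2)(A,B) \in \mathbb{R}_{\geq 0}$ for all $A, B \in \EA$.

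Applying this in particular to $A = B$ gives $(D_1\odot D_2)(A,A) \ge 0$, which is exactly weak positivity of the composite functional. Together with the properties (Hermiticity, Additivity, Normalisation) guaranteed by the construction, this shows that $D_1\odot D_2$ is in fact a decoherence functional, not merely a quasi-functional, and by construction it has non-negative real entries. Therefore $\Sys[1]\odot \Sys[2] \in \PosEntry$, and $\PosEntry$ is tensor-closed.

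There is no real obstacle here: the argument is essentially the observation that the entries of the composite functional are sums of products of non-negative reals, which parallels (but is much simpler than) the Kronecker-product argument used for $\Strong$ in Theorem \ref{strong_pos_composition}. The only point that requires any care is the appeal to the well-definedness of $(D_1\odot D_2)(A,B)$ under the choice of disjoint-union representation of $A$ and $B$, but that has already been established in equation (\ref{eq:dabb}) of Section \ref{section_composition}.
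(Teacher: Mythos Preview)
Your proof is correct and follows essentially the same argument as the paper: expand $A$ and $B$ as disjoint unions of product events, observe that $(D_1\odot D_2)(A,B)$ is a sum of products of non-negative reals, and conclude. If anything, you are slightly more explicit than the paper in noting that weak positivity (and hence membership in $\Weak$) follows from the special case $A=B$.
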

\begin{proof}  
Let  $\Sys[1]=\System[1]$ and $\Sys[2]=\System[2]$ be positive entry systems.

	Each event $E \in \EA[1] \odot \EA[2]$ can be expanded as a finite disjoint union
	\begin{equation}
		E = \bigsqcup_{i=1}^{n_E} {E_1}_i \times {E_2}_i\,,
	\end{equation}
	where ${E_1}_i \in \EA[1]$ and ${E_2}_i \in \EA[2]$. Then
	\begin{align*}
		D_1 \odot D_2 (A,B) &= \sum_{i=1}^{n_A} \sum_{j=1}^{n_B} \underbrace{D_1({A_1}_i,{B_1}_j)}_{\geq 0}\, \underbrace{D_2({A_2}_i,{B_2}_j)}_{\geq 0} \label{pos_entry_comp_geq_0}
		\\ & \geq 0 \,.
	\end{align*}
\end{proof}

Another example of a tensor-closed set of systems is the set of classical systems: a system is classical if there exists a classical (probability) measure $\mu$ on $\EA$ such that 
$D(A,B) = \mu(A \cap B)$ for all $A$ and $B$ in $\EA$. 

\begin{lemma} \label{pos_entry_composition}
The set of classical systems is tensor-closed.
\end{lemma}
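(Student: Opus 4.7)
The plan is to show that if $D_1(A,B) = \mu_1(A \cap B)$ and $D_2(A,B) = \mu_2(A \cap B)$ for classical probability measures $\mu_1$ on $\EA[1]$ and $\mu_2$ on $\EA[2]$, then the composed decoherence functional $D_1 \odot D_2$ is realised by a classical probability measure $\mu$ on $\EA[1] \odot \EA[2]$. The natural candidate is the product measure $\mu := \mu_1 \otimes \mu_2$, defined on product events by
\begin{equation*}
\mu(E_1 \times E_2) := \mu_1(E_1)\, \mu_2(E_2)\,,
\end{equation*}
and extended to general $X \in \EA[1] \odot \EA[2]$ by additivity on any disjoint-union expansion $X = \bigsqcup_i E_{1i} \times E_{2i}$.

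First I would verify that this extension is well-defined, i.e.\ independent of the chosen disjoint-union decomposition. This follows by exactly the same atom-refinement argument that was used for the functional $f$ leading to equation (\ref{eq:dabb}): any two such expansions have a common refinement in the finitely-generated product algebra $\EA_{1X} \otimes \EA_{2X}$, and the additivity of $\mu_1$ and $\mu_2$ separately guarantees that the sums agree. Next I would check that $\mu$ is a classical probability measure on $\EA[1] \odot \EA[2]$: non-negativity is immediate since $\mu_1, \mu_2 \geq 0$; finite additivity holds by construction; and normalisation $\mu(\HS[1] \times \HS[2]) = \mu_1(\HS[1])\mu_2(\HS[2]) = 1$ follows since both factors are probability measures.

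The key computation is then to show $(D_1 \odot D_2)(A, B) = \mu(A \cap B)$ for all $A, B \in \EA[1] \odot \EA[2]$. Expanding $A = \bigsqcup_{i} A_{1i} \times A_{2i}$ and $B = \bigsqcup_{j} B_{1j} \times B_{2j}$, I use the identity
\begin{equation*}
(A_{1i} \times A_{2i}) \cap (B_{1j} \times B_{2j}) = (A_{1i} \cap B_{1j}) \times (A_{2i} \cap B_{2j})
\end{equation*}
to write $A \cap B$ as a finite union of products indexed by $(i,j)$. The pairwise disjointness of these product terms follows from the disjointness of the expansions of $A$ and of $B$: if $(i,j) \ne (i',j')$ then either $A_{1i} \cap A_{1i'} = \emptyset$ or $A_{2i} \cap A_{2i'} = \emptyset$ (when $i \ne i'$), and similarly for the $B$ factors (when $j \ne j'$), so at least one factor in the intersection vanishes. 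Therefore
\begin{equation*}
\mu(A \cap B) = \sum_{i,j} \mu_1(A_{1i} \cap B_{1j})\, \mu_2(A_{2i} \cap B_{2j}) = \sum_{i,j} D_1(A_{1i}, B_{1j})\, D_2(A_{2i}, B_{2j}) = (D_1 \odot D_2)(A, B)\,,
\end{equation*}
by the definition of $D_1 \odot D_2$ in equation (\ref{eq:dab}). This identifies $D_1 \odot D_2$ as the classical decoherence functional associated with $\mu$, so $\Sys[1] \odot \Sys[2]$ is classical.

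The only step with any subtlety is well-definedness of the product measure, but as noted this reuses machinery already established earlier in the excerpt; everything else is bookkeeping with disjoint unions and distributivity.
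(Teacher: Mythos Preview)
Your proposal is correct and follows essentially the same route as the paper: expand $A$ and $B$ as disjoint unions of product events, use the identity $(A_{1i}\times A_{2i})\cap(B_{1j}\times B_{2j})=(A_{1i}\cap B_{1j})\times(A_{2i}\cap B_{2j})$, and identify the resulting double sum with $\mu_1\otimes\mu_2(A\cap B)$. You are simply more explicit than the paper about the well-definedness of the product measure and the pairwise disjointness of the $(i,j)$-indexed pieces of $A\cap B$, which the paper leaves implicit.
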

\begin{proof}  
Let  $\Sys[1]=\System[1]$ and $\Sys[2]=\System[2]$ be classical systems with corresponding 
classical measures $\mu_1$ and $\mu_2$ respectively.

	Each event $E \in \EA[1] \odot \EA[2]$ can be expanded as a finite disjoint union
	\begin{equation}
		E = \bigsqcup_{i=1}^{n_E} {E_1}_i \times {E_2}_i\,,
	\end{equation}
	where ${E_1}_i \in \EA[1]$ and ${E_2}_i \in \EA[2]$. Then
	\begin{align*}
		D_1 \odot D_2 (A,B) &= \sum_{i=1}^{n_A} \sum_{j=1}^{n_B} D_1({A_1}_i,{B_1}_j) \, D_2({A_2}_i,{B_2}_j)   \label{classsys}
		\\ &=  \sum_{i=1}^{n_A} \sum_{j=1}^{n_B} \mu_1({A_1}_i\cap{B_1}_j) \, \mu_2({A_2}_i\cap {B_2}_j) 	\\
		& = \sum_{i=1}^{n_A} \sum_{j=1}^{n_B}   \mu_1 \otimes \mu_2 (({A_1}_i \times {A_2}_i)\cap ({B_1}_j \times {B_2}_j)) \\
		& =   \mu_1 \otimes \mu_2 (A\cap B)  \,,
	\end{align*}
	where $\mu_1 \otimes \mu_2$ is the product classical measure on the product algebra $\EA_1
	\otimes \EA_2$. 
\end{proof}

\subsection{Galois Self-Dual Sets}

In \cite{Boes:2016ihr}, Boes and Navascues showed that, 
in the case where the set of systems considered is the set of \textit{finite} systems,  $\Weak_{fin}$,   the set of finite strongly positive systems, $\Strong_{fin}$,  is a maximal tensor-closed set:  the set $\Strong_{fin}$ cannot be enlarged to include any system in $\Weak_{fin} \setminus \Strong_{fin}$ and remain tensor-closed.
We will reproduce this result,  extending it to infinite systems $\Weak$ and
$\Strong$. We will formalise the maximality condition using the concept of  \textit{Galois dual}\footnote{We thank Rafael Sorkin for introducing us to this concept}:
\begin{definition}[Galois Dual]
	The Galois dual of a subset $\SysSpace{A} \subseteq \Weak$ is the set
	\begin{equation}
		\dual{\SysSpace{A}} := \setdef{ \Sys[1] \in \Weak} { \Sys[1] \odot \Sys[2] \in \Weak \quad \forall \Sys[2] \in \SysSpace{A}} \,.
	\end{equation}
\end{definition}
\begin{definition}[Galois Self-Dual]
	A subset $\SysSpace{A} \subset \Weak$ is Galois self-dual if $\dual{\SysSpace{A}} = \SysSpace{A}$.
\end{definition}
In other words, the Galois dual of  a set of systems $\SysSpace{A}$  is the set of 
systems whose composition with \textit{any} element of $\SysSpace{A}$ is also a system.

Note: the term ``Galois'' dual refers to the fact that the Galois dual operation, together with itself, is an \textit{antitone Galois connection}. Indeed, $\SysSpace{A} \subseteq \dual{\SysSpace{B}} \Leftrightarrow \SysSpace{B} \subseteq \dual{\SysSpace{A}}$.

\begin{lemma} \label{composition_closed_subset_of_dual}
	If $\SysSpace{A} \subseteq \Weak$ is tensor-closed, then ${\SysSpace{A}} \subseteq \dual{\SysSpace{A}}$.
\end{lemma}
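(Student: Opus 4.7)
The plan is to unwind the definitions directly; this lemma is essentially a formal consequence of how the Galois dual is set up.

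Fix an arbitrary $\Sys[1] \in \SysSpace{A}$. To show $\Sys[1] \in \dual{\SysSpace{A}}$, I need to verify the two clauses in the definition of $\dual{\SysSpace{A}}$: first, that $\Sys[1] \in \Weak$, and second, that $\Sys[1] \odot \Sys[2] \in \Weak$ for every $\Sys[2] \in \SysSpace{A}$. The first clause is immediate from the hypothesis $\SysSpace{A} \subseteq \Weak$, which gives $\Sys[1] \in \Weak$ at once.

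For the second clause, I take an arbitrary $\Sys[2] \in \SysSpace{A}$ and use tensor-closedness: by definition of tensor-closed, $\Sys[1] \odot \Sys[2] \in \SysSpace{A}$. Then once more applying $\SysSpace{A} \subseteq \Weak$, I conclude $\Sys[1] \odot \Sys[2] \in \Weak$, as required. Since $\Sys[2]$ was arbitrary in $\SysSpace{A}$, this establishes $\Sys[1] \in \dual{\SysSpace{A}}$, and since $\Sys[1]$ was arbitrary in $\SysSpace{A}$, we obtain $\SysSpace{A} \subseteq \dual{\SysSpace{A}}$.

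There is no real obstacle here: the argument is a two-step syllogism, (tensor-closed) $\Rightarrow$ $\Sys[1]\odot\Sys[2]\in\SysSpace{A}$, combined with $\SysSpace{A}\subseteq\Weak$ $\Rightarrow$ $\Sys[1]\odot\Sys[2]\in\Weak$. The content of the lemma is really just to record this structural observation, which will be used later (in combination with its converse direction obtained by swapping the roles of $\SysSpace{A}$ and $\SysSpace{B}$ in the Galois connection $\SysSpace{A}\subseteq\dual{\SysSpace{B}}\Leftrightarrow\SysSpace{B}\subseteq\dual{\SysSpace{A}}$) to pin down Galois self-dual sets as the maximal tensor-closed subsets of $\Weak$.
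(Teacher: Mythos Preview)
Your proof is correct and follows essentially the same approach as the paper's own proof: take $\Sys[1]\in\SysSpace{A}$, use tensor-closedness to get $\Sys[1]\odot\Sys[2]\in\SysSpace{A}\subseteq\Weak$ for all $\Sys[2]\in\SysSpace{A}$, and conclude $\Sys[1]\in\dual{\SysSpace{A}}$. You are slightly more explicit than the paper in separately verifying the clause $\Sys[1]\in\Weak$ from the definition of $\dual{\SysSpace{A}}$, but this is a minor expository difference, not a different argument.
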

\begin{proof}
	Consider $\Sys[1] \in \SysSpace{A}$. Since $\SysSpace{A}$ is tensor-closed, for all $\Sys[2] \in \SysSpace{A}$,
	\begin{equation}
		\Sys[1] \odot \Sys[2] \in \SysSpace{A} \,.
	\end{equation}
	Therefore, $\Sys[1] \in \dual{\SysSpace{A}}$.
\end{proof}
Amongst the tensor-closed subsets of $\Weak$, a subset $\SysSpace{A}$ that is also Galois self-dual is maximal because there is no system outside of $\SysSpace{A}$ that can be composed with all systems in $\SysSpace{A}$ to produce a system.

\begin{theorem} \label{S_self_dual}
	$\dual{\Strong} = \Strong $.
\end{theorem}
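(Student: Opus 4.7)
The plan is to prove the two inclusions of $\dual{\Strong} = \Strong$ separately. The easy direction $\Strong \subseteq \dual{\Strong}$ is immediate: Theorem \ref{strong_pos_composition} shows that $\Strong$ is tensor-closed, and Lemma \ref{composition_closed_subset_of_dual} then gives the inclusion. For the reverse, $\dual{\Strong} \subseteq \Strong$, I would argue by contrapositive, constructing, for each non-strongly-positive system, a strongly positive ``witness'' system whose composition with it fails to be weakly positive.

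Suppose $\Sys[1] = \System[1] \in \Weak \setminus \Strong$. Then there is a finite family $\Eset{B} = \{B_1, \dots, B_n\} \subseteq \EA[1]$ whose event matrix $M_{ij} := D_1(B_i, B_j)$ is not positive semi-definite, so we may pick $v \in \mathbb{C}^n$ with $v^\dagger M v < 0$; rescaling $v$ by a small positive real does not change the sign of $v^\dagger M v$, so we may additionally assume $\bigl|\sum_{i=1}^n v_i\bigr|^2 \leq 1$. Take $\HS[2] := \{0, 1, \dots, n\}$ with $\EA[2]$ its full power-set Boolean algebra, and define $D_2$ on the atomic (singleton) events by the $(n{+}1)\times(n{+}1)$ matrix
\[
N_{ij} := \bar v_i v_j \ (1 \leq i,j \leq n), \qquad N_{00} := 1 - \Bigl|\sum_{i=1}^n v_i\Bigr|^2, \qquad N_{0i} = N_{i0} := 0 \ (i \neq 0),
\]
extending to all of $\EA[2] \times \EA[2]$ by bi-additivity. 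Because $N$ is block-diagonal with blocks $\bar v \bar v^\dagger$ (rank one and positive semi-definite) and the non-negative scalar $N_{00}$, it is Hermitian and positive semi-definite, and its entries sum to $1$, so $D_2(\HS[2], \HS[2]) = 1$. Applying Lemma \ref{sum_of_strong_pos_events_is_strong_pos} with the set of atoms as $\Eset{B}'$ shows that every event matrix of $D_2$ is positive semi-definite, so $\Sys[2] := \System[2] \in \Strong$.

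Now consider the composed event $E := \bigsqcup_{i=1}^n B_i \times \{i\} \in \EA[1] \odot \EA[2]$, which is a genuine disjoint union because the singletons $\{i\}$ are pairwise disjoint for distinct $i$. By \eqref{comp_rule_basic} and bi-additivity,
\[
(D_1 \odot D_2)(E,E) = \sum_{i,j=1}^{n} D_1(B_i, B_j)\, N_{ij} = \sum_{i,j=1}^{n} M_{ij}\, \bar v_i v_j = v^\dagger M v < 0,
\]
so $\Sys[1] \odot \Sys[2]$ is not weakly positive and hence $\Sys[1] \notin \dual{\Strong}$, completing the proof. The only technical obstacle is maintaining the normalisation $D_2(\HS[2], \HS[2]) = 1$ simultaneously with the positive semi-definiteness of $N$ and with the negativity on $E$; this is what the auxiliary ``dummy'' history $0$ is there for, absorbing the residual weight $1 - \bigl|\sum_i v_i\bigr|^2$ in a separate block that does not interfere with the rank-one block $\bar v \bar v^\dagger$ that supplies the negative value on $E$.
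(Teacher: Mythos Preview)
Your proof is correct and follows essentially the same approach as the paper's. Both arguments construct the same ``witness'' system: a finite history space with one extra dummy history, atomic event matrix given by the rank-one block $\bar v_i v_j$ together with a scalar block on the dummy history to absorb normalisation, and then evaluate $D_1\odot D_2$ on the diagonal event $E=\bigsqcup_i B_i\times\{i\}$ to extract $v^\dagger M v$. The only differences are cosmetic: the paper argues directly (showing $v^\dagger M_1 v\ge 0$ for arbitrary $v$) while you argue by contrapositive, and the paper normalises by dividing the whole matrix by $r=1+\bigl|\sum_A v_A\bigr|^2$ whereas you rescale $v$ first and put the residual weight $1-\bigl|\sum_i v_i\bigr|^2$ on the dummy history.
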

\begin{proof}
	Since $\Strong$ is tensor-closed, by \lemref{composition_closed_subset_of_dual} we have $ \Strong \subseteq \dual{\Strong}$.

	Now consider $\Sys[1]=\System[1] \in \dual{\Strong}$. To prove that $\Sys[1] \in \Strong$ we need to show that, for any finite subset $\Eset{B} \subseteq \EA[1]$, the corresponding event matrix $M_1$ is positive semi-definite.
	
Let $v$ be a vector in $ {\mathbb{C}}^{\vert \Eset{B} \vert}$. We define a square matrix $M_2$, of order 
$(\vert \Eset{B} \vert + 1)$,  indexed by $\Eset{B}' = \Eset{B} \cup \{ x\}$ where $x$ is some extra index value, 
	\begin{equation}
		{M_2}_{AB} := \frac{1}{r} \begin{cases}
			{v_A}^* v_B & \quad \text{if } A,B \in \Eset{B}
			\\ 1 & \quad \text{if } A=B=x
			\\ 0 & \quad \text{otherwise,}
		\end{cases}
	\end{equation}
where
	\begin{equation}
		r = 1 + \sum_{A \in \Eset{B}} {v_A}^* \sum_{B \in \Eset{B}} v_B \,.
	\end{equation}
	Note that the extra index value $x$ and the ${M_2}_{xx}= 1$ entry are necessary to ensure that $r$ is a strictly positive number, in the cases where $\sum_{A \in \Eset{B}} {v_A} = 0$. 
	This matrix is normalised --- in the decoherence functional sense that the sum of its entries equals 1 --- and Hermitian. Moreover, it is positive semi-definite because, for any $u \in {\mathbb{C}}^{\vert \Eset{B} \vert + 1}$,
	\begin{align*}
		u^\dagger {M_2} u &= \frac{1}{r}\,  {u_{x}}^* u_{x} + \frac{1}{r} \sum_{A \in \Eset{B}} \sum_{B \in \Eset{B}} {u_{A}}^* {v_{A}}^* v_B u_B
		\\ &= \frac{1}{r} \vert u_{x} \vert^2 + \frac{1}{r} \Bigg\vert \sum_{A \in \Eset{B}} v_A u_A \Bigg\vert^2
		\\ & \geq 0\,.
	\end{align*}

	Now, consider the system $\Sys[2]=\System[2]$ where $\HS[2] = \{\gamma_\alpha\,|\, \alpha \in \Eset{B}'\}$ is a finite history space indexed by $\Eset{B}'$. The singleton sets 
	$\{\gamma_\alpha\}$, $ \alpha \in \Eset{B}'$, are the atoms of the algebra $\EA_2$. 
	Since $\HS[2]$ is finite, we can define $D_2$ by 
	choosing  $M_2$ as the event matrix for the set of atoms and  $D_2$ is defined by additivity for all the other events. $D_2$ is strongly positive, so $\Sys[2] \in \Strong$. Therefore, since $\Sys[1] \in \dual{\Strong}$, it follows by definition that $\Sys[1] \odot \Sys[2] \in \Weak$, which implies that $D_1 \odot D_2$ is weakly positive.
	
	Now, consider the event $E \in \EA[1] \odot \EA[2]$ given by
	\begin{equation}
		E = \bigsqcup_{A \in \Eset{B}} A \times \set{\gamma_A} \,.
	\end{equation}
Since the $\set{\gamma_A}$ are atoms of $\EA_2$, the union is indeed a disjoint union. Since $D_1 \odot D_2$ is weakly positive, it follows that
	\begin{align*}
		0 & \leq D_1 \odot D_2 (E,E)
		\\ &= \sum_{A \in \Eset{B}} \sum_{B \in \Eset{B}} D_1(A,B)\, D_2(\set{\gamma_A},\set{\gamma_B})
		\\ &= \sum_{A \in \Eset{B}} \sum_{B \in \Eset{B}} {M_1}_{AB} {M_2}_{AB}
		\\ &= \frac{1}{r} \sum_{A \in \Eset{B}} \sum_{B \in \Eset{B}} {v_A}^* {M_1}_{AB} v_B
		\\ &= \frac{1}{r} \, v^\dagger M_1 v \,.
	\end{align*}
	Since $r$ is a positive number and $v$ is arbitrary, this implies that $M_1$ is positive semi-definite. $\Eset{B} \subseteq \EA[1]$ was also arbitrary and so $\Sys[1] \in \Strong$ and  $\dual{\Strong} \subseteq \Strong$.\end{proof}
Boes and Navascues' proof of this result for finite systems is very similar: they use a decoherence functional 
in the role of $D_2$ that is constructed explicitly from strings of projectors and an initial state in a Hilbert space. 	
The next two  lemmas  show that the set of positive entry systems $\PosEntry$ is not Galois self-dual.
\begin{lemma} \label{P_dual_equiv_pos_real}
	Let $\Sys=\System \in \Weak$ be a system. 
	\begin{equation}
		\Sys \in \dual{\PosEntry} \quad \Longleftrightarrow \quad \re\Arg[\big]{D(A,B)} \geq 0 \quad \forall A,B \in \EA \,.
	\end{equation}
\end{lemma}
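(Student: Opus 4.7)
The plan is to prove the two implications separately, both using the product-event expansion of composite events from Section \ref{section_composition} and the bilinear formula \eqref{eq:dab} for the composed functional $D \odot D_2$ on disjoint unions of product events.

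\emph{($\Leftarrow$):} Assume $\re D(A,B) \geq 0$ for all $A, B \in \EA$. I would pick any $\Sys[2] = \System[2] \in \PosEntry$ and any $E \in \EA \odot \EA[2]$, expand it as $E = \bigsqcup_{i=1}^{n_E} {E_1}_i \times {E_2}_i$, and apply \eqref{eq:dab}:
\begin{align*}
D \odot D_2(E,E) = \sum_{i=1}^{n_E}\sum_{j=1}^{n_E} D({E_1}_i, {E_1}_j)\, D_2({E_2}_i, {E_2}_j) \,.
\end{align*}
The left-hand side is real by Hermiticity of $D \odot D_2$, and each factor $D_2({E_2}_i, {E_2}_j)$ is real and non-negative because $D_2 \in \PosEntry$; taking real parts of the sum then gives
\begin{align*}
D \odot D_2(E,E) = \sum_{i=1}^{n_E}\sum_{j=1}^{n_E} \re D({E_1}_i, {E_1}_j) \cdot D_2({E_2}_i, {E_2}_j) \geq 0 \,,
\end{align*}
term by term. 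Hence $D \odot D_2$ is weakly positive, so $\Sys \in \dual{\PosEntry}$.

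\emph{($\Rightarrow$):} Given $A, B \in \EA$, I would exhibit a small positive entry witness whose composition with $\Sys$ isolates $\re D(A,B)$. Take $\HS[2] = \{\gamma_1, \gamma_2\}$ and define $D_2$ on the two atoms by
\begin{align*}
D_2(\{\gamma_1\},\{\gamma_1\}) = D_2(\{\gamma_2\},\{\gamma_2\}) = 0, \qquad D_2(\{\gamma_1\},\{\gamma_2\}) = D_2(\{\gamma_2\},\{\gamma_1\}) = \tfrac{1}{2} \,,
\end{align*}
extended to $\EA[2]$ by additivity. A direct enumeration of the four events of $\EA[2]$ shows that every value of $D_2$ lies in $\{0, \tfrac{1}{2}, 1\}$ and that $D_2$ is normalised, Hermitian and weakly positive, so $\Sys[2] \in \PosEntry$. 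For the composite event $E = (A \times \{\gamma_1\}) \sqcup (B \times \{\gamma_2\})$, formula \eqref{eq:dab} yields
\begin{align*}
D \odot D_2(E,E) = \tfrac{1}{2}\bigl(D(A,B) + D(B,A)\bigr) = \re D(A,B) \,,
\end{align*}
using Hermiticity of $D$ in the second step. Since $\Sys \in \dual{\PosEntry}$, the left-hand side is non-negative, which is exactly the desired inequality.

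The argument is largely mechanical; the only delicate point is the construction of the witness $D_2$. The atomic matrix must be arranged so that every entry of $D_2$ on the full algebra $\EA[2]$ --- and not merely on atoms --- is real and non-negative, which is why zeros are placed on the atomic diagonal and $\tfrac{1}{2}$ off the diagonal: with this choice each of the remaining entries of $D_2$ is a sum of non-negative atomic values.
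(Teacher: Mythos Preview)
Your proof is correct and essentially identical to the paper's: the same two-history witness with atomic matrix $\tfrac{1}{2}\bigl(\begin{smallmatrix}0&1\\1&0\end{smallmatrix}\bigr)$ for the $(\Rightarrow)$ direction, and the same product-event expansion for $(\Leftarrow)$. The only cosmetic difference is that in $(\Leftarrow)$ the paper symmetrises the double sum over $(i,j)\leftrightarrow(j,i)$ using the symmetry of $D_2$ before invoking Hermiticity of $D$, whereas you observe directly that the left-hand side is real and pull $\re$ through the sum using the realness of $D_2$ --- these are two phrasings of the same step.
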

\begin{proof}
	Let $\Sys[1]=\System[1] \in \Weak$ and $\Sys[2]=\System[2] \in \PosEntry$. Note that this implies $D_2$ is a real symmetric functional.
	
	For ``$\Longleftarrow$'' suppose that 
	\begin{equation}
		\re\Arg[\big]{D_1(A,B)} \geq 0 \quad \forall A,B \in \EA[1] \,.
	\end{equation}
	Then, for any event $E \in \EA[1] \odot \EA[2]$, expanded as the disjoint union
	\begin{equation}
		E = \bigsqcup_{i=1}^{n_E} {E_1}_i \times {E_2}_i \,,
	\end{equation}
	the corresponding diagonal entry in $D_1 \odot D_2$ is
	\begin{alignat}{2}
		& D_1 \odot D_2 (E,E) \nonumber
		\\* &= \sum_{i=1}^{n_E} \sum_{j=1}^{n_E} D_1({E_1}_i,{E_1}_j)\, D_2({E_2}_i,{E_2}_j)
		\\ &= \frac{1}{2} \sum_{i=1}^{n_E} \sum_{j=1}^{n_E} \mathopen \big ( D_1({E_1}_i,{E_1}_j)\, D_2({E_2}_i,{E_2}_j) + D_1({E_1}_j,{E_1}_i)\, \makebox[0pt][l]{$ \displaystyle D_2({E_2}_j,{E_2}_i) \mathclose \big ) $}
		\\ &= \frac{1}{2} \sum_{i=1}^{n_E} \sum_{j=1}^{n_E} \Arg[\big]{ D_1({E_1}_i,{E_1}_j) + D_1({E_1}_j,{E_1}_i) }D_2({E_2}_i,{E_2}_j) && \BY{symmetry of $D_2$}
		\\ &= \sum_{i=1}^{n_E} \sum_{j=1}^{n_E} \underbrace{\re\Arg[\big]{ D_1({E_1}_i,{E_1}_j)}}_{\geq 0} \underbrace{D_2({E_2}_i,{E_2}_j)}_{\geq 0} && \BY{Hermiticity of $D_1$}
		\\ &\geq 0 \,.
	\end{alignat}
	Therefore, $\Sys[1] \odot \Sys[2] \in \Weak$ and so $\Sys[1] \in \dual{\PosEntry}$.

	For ``$\Longrightarrow$'' suppose $\Sys[1] \in \dual{\PosEntry}$.
	Let $\Sys[2]$ have exactly two histories, $\{\gamma_a, \gamma_b\}$, and let the event matrix of the atoms, 
	$\{\gamma_a\}$ and $\{\gamma_b\}$,  be
	 $M = \tfrac{1}{2} \Arg[\big]{ \begin{smallmatrix} 0 & 1 \\ 1 & 0 \end{smallmatrix} }$, so  $ \Sys[2] \in \PosEntry$. 
	 
	Let $A,B \in \EA[1]$, and consider the event $E \in \EA[1] \odot \EA[2]$ given by
	\begin{equation}
		E = A \times \set{\gamma_a} \sqcup B \times \set{\gamma_b}\,.
	\end{equation}
	 Then, since $\Sys[1] \odot \Sys[2] \in \Weak$, it follows that
	\begin{align*}
		0 & \leq D_1 \odot D_2 (E,E)
		\\ &= D_1(A,A) M_{aa} + D_1(B,B) M_{bb} + D_1(A,B) M_{ab} + D_1(B,A) M_{ba}
		\\ &= \frac{1}{2} \Arg[\big]{ D_1(A,B) + D_1(B,A) }
		\\ &= \re\Arg[\big]{D_1(A,B)} \,.
	\end{align*}
\end{proof}

\begin{lemma} \label{P_not_self_dual}
	$\dual{\PosEntry} \supset \PosEntry$.
\end{lemma}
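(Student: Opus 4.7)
The strategy is to prove strict containment by exhibiting an explicit witness $\Sys \in \dual{\PosEntry}\setminus \PosEntry$. Half of the inclusion, $\PosEntry \subseteq \dual{\PosEntry}$, follows automatically from the tensor-closedness of $\PosEntry$ (the previous lemma) combined with \lemref{composition_closed_subset_of_dual}; only the strictness remains.

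The key tool for constructing the witness is \lemref{P_dual_equiv_pos_real}, which characterises $\dual{\PosEntry}$ as the set of weakly positive systems whose decoherence functional satisfies $\re(D(A,B)) \ge 0$ for all events $A,B$. Membership in $\PosEntry$, by contrast, requires each $D(A,B)$ itself to be real and non-negative. The task therefore reduces to producing a weakly positive decoherence functional whose entries all have non-negative real part but at least one of which has a non-zero imaginary part.

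A minimal finite model suffices. I would take $\HS = \set{\gamma_1,\gamma_2}$ and specify $D$ on the atoms of $\EA$ by the Hermitian matrix
\[
M \;=\; \begin{pmatrix} \tfrac{1}{2} & \tfrac{i}{4} \\[2pt] -\tfrac{i}{4} & \tfrac{1}{2} \end{pmatrix},
\]
extending by bi-additivity to the four-element event algebra. The matrix $M$ has positive diagonal entries and determinant $\tfrac{3}{16}$, hence is positive definite; its entries sum to $1$, so the extended $D$ is a normalised, strongly positive (and therefore weakly positive) decoherence functional. A direct tabulation of the ten values $D(A,B)$ on the event algebra shows that each real part lies in $\set{0,\tfrac{1}{4},\tfrac{1}{2},1}$, so by \lemref{P_dual_equiv_pos_real} the resulting system lies in $\dual{\PosEntry}$. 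However $D(\set{\gamma_1},\set{\gamma_2}) = \tfrac{i}{4}$ is not real, so the system is not in $\PosEntry$, establishing strict containment.

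There is no substantive obstacle here. The only mildly subtle design choice is to take the off-diagonal entries of $M$ to be \emph{purely} imaginary rather than having arbitrary real parts, so that the real-part condition $\re(D(A,B)) \ge 0$ holds automatically on all pairs after extension by bi-additivity, rather than needing to be checked entry by entry against possible cancellations.
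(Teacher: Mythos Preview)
Your proof is correct and follows essentially the same approach as the paper: both invoke \lemref{P_dual_equiv_pos_real} and then exhibit a two-history system with purely imaginary off-diagonal atomic entries as a witness in $\dual{\PosEntry}\setminus\PosEntry$. The paper's matrix is $\tfrac{1}{2}\bigl(\begin{smallmatrix}1 & i\\ -i & 1\end{smallmatrix}\bigr)$ rather than your $\bigl(\begin{smallmatrix}1/2 & i/4\\ -i/4 & 1/2\end{smallmatrix}\bigr)$, but the idea is identical (and, minor quibble, the value $1/4$ does not actually occur among your real parts---they are all in $\{0,1/2,1\}$).
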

\begin{proof}
	By \lemref{P_dual_equiv_pos_real}, $\dual{\PosEntry}$ equals the set of systems  $\System$ such that $\re(D)$ is a positive entry decoherence functional. This will include all the systems in $\PosEntry$, but will also include, for example, the system with two histories whose ``atomic'' event  matrix is $\tfrac{1}{2} \Arg[\big]{\begin{smallmatrix} 1 & i \\ -i & 1 \end{smallmatrix} }$. This is not a positive entry system. 
	\end{proof}

\subsection{Self-Composition}

We will prove that $\Strong$ is the only subset of $\Weak$ that is tensor-closed and Galois self-dual: 
\begin{theorem} \label{composition_closed_and_self_dual_implies_strong_pos}
	If $\SysSpace{A} \subseteq \Weak$ is tensor-closed and Galois self-dual, then $\SysSpace{A}=\Strong$.
\end{theorem}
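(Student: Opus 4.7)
The two inclusions $\SysSpace{A} \subseteq \Strong$ and $\Strong \subseteq \SysSpace{A}$ are equivalent under our hypotheses: applying the antitone Galois dual to the first and using $\dual{\Strong} = \Strong$ (\thmref{S_self_dual}) together with $\dual{\SysSpace{A}} = \SysSpace{A}$ yields the second, and conversely. I therefore aim to prove $\SysSpace{A} \subseteq \Strong$.

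Fix $\Sys[1] = \System[1] \in \SysSpace{A}$. To show $\Sys[1] \in \Strong$, I verify that for every finite $\Eset{B} \subseteq \EA[1]$ and every $v \in {\mathbb{C}}^{|\Eset{B}|}$, the quantity $v^\dagger M_1 v$ is non-negative, where $M_1$ is the event matrix of $\Eset{B}$. Mirroring the construction in the proof of \thmref{S_self_dual}, I introduce a finite strongly positive auxiliary system $\Sys[2]$ on $\HS[2] = \{\gamma_\alpha : \alpha \in \Eset{B} \cup \{x\}\}$ whose atomic event matrix has entries $\frac{1}{r} v_A^* v_B$ on the $\Eset{B}$-block, $\frac{1}{r}$ at the $(x,x)$ entry, and zero elsewhere, with $r$ chosen so the entries sum to $1$. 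The event $E := \bigsqcup_{A \in \Eset{B}} A \times \set{\gamma_A}$ of $\Sys[1] \odot \Sys[2]$ satisfies $(D_1 \odot D_2)(E,E) = \frac{1}{r}\, v^\dagger M_1 v$, exactly as in that proof. It therefore suffices to establish $\Sys[1] \odot \Sys[2] \in \Weak$.

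Since $\SysSpace{A}$ is self-dual and $\Sys[1] \in \SysSpace{A}$, weak positivity of $\Sys[1] \odot \Sys[2]$ would follow from $\Sys[2] \in \SysSpace{A}$. By self-duality again, $\Sys[2] \in \SysSpace{A}$ is equivalent to $\Sys[2] \odot \Sys' \in \Weak$ for every $\Sys' \in \SysSpace{A}$: this is automatic for $\Sys' \in \Strong$ because $\Sys[2] \in \Strong = \dual{\Strong}$, but for $\Sys' \in \SysSpace{A} \setminus \Strong$ it reduces to the very claim $\SysSpace{A} \subseteq \Strong$ that I am trying to prove. Closing this apparent circularity is the main obstacle.

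My plan is to break the circularity by a contrapositive maximality argument. Combining \thmref{S_self_dual} with tensor-closure of $\Strong$ (\thmref{strong_pos_composition}) shows that $\Strong$ is maximal among tensor-closed subsets of $\Weak$: any strictly larger tensor-closed set would contain a pair $(\Sys, \Sys^\ast)$ with $\Sys \notin \Strong$, $\Sys^\ast \in \Strong$ and $\Sys \odot \Sys^\ast \notin \Weak$, contradicting tensor-closure. Supposing for contradiction that $\SysSpace{A} \not\subseteq \Strong$ and picking $\Sys[1] \in \SysSpace{A} \setminus \Strong$, \thmref{S_self_dual} supplies a witness $\Sys[2] \in \Strong$ with $\Sys[1] \odot \Sys[2] \notin \Weak$; using self-duality of $\SysSpace{A}$ to extract a chain of systems inside $\SysSpace{A}$ whose composition inherits the failure of weak positivity ought to produce a tensor-closure element outside $\Weak$, contradicting $\SysSpace{A} \subseteq \Weak$. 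The principal technical difficulty, where I expect most of the work to lie, will be making this propagation rigorous---tracking how weak-positivity failures compose, and ensuring the argument terminates with a genuine contradiction rather than looping back to the original circularity.
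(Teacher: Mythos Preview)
Your first paragraph is correct and useful: under self-duality of $\SysSpace{A}$ and \thmref{S_self_dual}, the inclusions $\SysSpace{A}\subseteq\Strong$ and $\Strong\subseteq\SysSpace{A}$ are indeed equivalent. But from there the proposal has a genuine gap.

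The maximality argument you sketch only shows that $\Strong$ is \emph{a} maximal tensor-closed subset of $\Weak$; it does not show that every tensor-closed self-dual set sits inside $\Strong$. In fact $\PosEntry$ is tensor-closed (\lemref{pos_entry_composition}) and is \emph{not} contained in $\Strong$ (the $2\times 2$ matrix $N$ in \eqref{weakexample} gives a positive-entry system that is not strongly positive). So there exist tensor-closed sets incomparable with $\Strong$, and your ``chain of systems inside $\SysSpace{A}$ whose composition inherits the failure of weak positivity'' cannot be manufactured from the witness $\Sys[2]\in\Strong$ alone: that witness may simply lie outside $\SysSpace{A}$, and nothing in your outline forces the bad behaviour back into $\SysSpace{A}$. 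You explicitly flag this propagation step as where ``most of the work'' lies, but as written there is no mechanism for it, and the $\PosEntry$ obstruction shows it cannot be purely formal.

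The paper's proof supplies the missing structural input. It first establishes the dichotomy of \thmref{composition_closed_subset_of_P_or_S}: any tensor-closed $\SysSpace{A}\subseteq\Weak$ satisfies $\SysSpace{A}\subseteq\Strong$ or $\SysSpace{A}\subseteq\PosEntry$. This rests on the substantial self-composition \lemref{if_not_SorP_then_not_weak_to_pow}, which shows that any system in $\Weak\setminus(\Strong\cup\PosEntry)$ leaves $\Weak$ after enough self-compositions $\Sys^{\odot k}$; the proof is a combinatorial determinant argument using \lemref{form_of_det} and \lemref{change_sign_of_cos}. Self-duality then enters only to kill the $\PosEntry$ branch, via \lemref{P_not_self_dual}: $\SysSpace{A}\subseteq\PosEntry$ would give $\SysSpace{A}\subseteq\PosEntry\subsetneq\dual{\PosEntry}\subseteq\dual{\SysSpace{A}}=\SysSpace{A}$, a contradiction. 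Your proposal never mentions $\PosEntry$ or self-composition, and without them the argument cannot close.
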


The rest of the paper is devoted to proving Theorem \ref{composition_closed_and_self_dual_implies_strong_pos}.

A system is in $\Strong$ if and only if all its event matrices 
are positive semi-definite. A matrix is positive semi-definite if and only if
every principal submatrix --- a square submatrix formed by deleting a set of rows and the matching set of columns --- has non-negative determinant. Since a principal submatrix of an event matrix is also an event matrix --- of a subset of the original set of events --- this means that 
a system is not in $\Strong$ if and only if there exists an event matrix with a negative determinant.

We will need the following useful form of the determinant of a matrix:
\begin{lemma} \label{form_of_det}
	For a complex square matrix $M$ of order $m>1$,
	\begin{equation}
		m! \det M = 2 \evensum(M) - 2 \eo(M) \,,
	\end{equation}
	where
	\begin{equation}
		\evensum(M) := \sum_{\pi \in \ev{m}} \sum_{\pi' \in \ev{m}} \prod_{i=1}^m M_{\pi(i)\pi'(i)} \quad \textrm{and} \quad
		\eo(M) := \sum_{\pi \in \ev{m}} \sum_{\pi' \in \od{m}} \prod_{i=1}^m M_{\pi(i)\pi'(i)} \,,
	\end{equation}
	where $\ev{m}$ ($\od{m}$) is the set of all even (odd) permutations of $[m]:= \{1,2,\dots m\}$.
\end{lemma}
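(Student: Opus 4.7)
The plan is to reduce the identity directly to the Leibniz formula
\[
\det M = \sum_{\sigma \in S_m} \mathrm{sgn}(\sigma) \prod_{i=1}^m M_{i,\sigma(i)}
\]
by a single change of variables on permutations. Since $\mathrm{sgn}(\pi')$ equals $+1$ on $\ev{m}$ and $-1$ on $\od{m}$, I would first rewrite the right-hand side as
\[
\evensum(M) - \eo(M) = \sum_{\pi \in \ev{m}} \sum_{\pi' \in S_m} \mathrm{sgn}(\pi') \prod_{i=1}^m M_{\pi(i),\pi'(i)},
\]
so that the claim $m! \det M = 2\evensum(M) - 2\eo(M)$ is equivalent to showing that this double sum equals $(m!/2) \det M$.

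Next, for each fixed $\pi \in \ev{m}$, I would reparametrise the inner sum by setting $\sigma := \pi' \circ \pi^{-1} \in S_m$, so $\pi' = \sigma \circ \pi$. Because $\mathrm{sgn}$ is a homomorphism and $\mathrm{sgn}(\pi) = +1$, one has $\mathrm{sgn}(\pi') = \mathrm{sgn}(\sigma)$. Relabelling the product index by $j = \pi(i)$, which is a bijection of $[m]$ since $\pi \in S_m$, the product collapses to
\[
\prod_{i=1}^m M_{\pi(i),\sigma(\pi(i))} = \prod_{j=1}^m M_{j,\sigma(j)},
\]
which is independent of $\pi$. Hence the inner sum reduces to $\sum_{\sigma \in S_m} \mathrm{sgn}(\sigma) \prod_j M_{j,\sigma(j)} = \det M$ for every even $\pi$, and summing over $\ev{m}$ gives $|\ev{m}| \det M = (m!/2) \det M$, from which the lemma follows by multiplication by $2$.

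There is no substantive obstacle. The only point requiring the hypothesis $m > 1$ is the counting identity $|\ev{m}| = m!/2$, which follows because left-multiplication by any fixed transposition provides a bijection $\ev{m} \to \od{m}$ (and such a transposition exists precisely when $m \geq 2$). The rest of the argument is a bookkeeping exercise in the homomorphism property of $\mathrm{sgn}$ and in noting that $\pi$ acts as a relabelling of the row indices of the product $\prod_i M_{\pi(i),\pi'(i)}$.
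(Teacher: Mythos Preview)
Your argument is correct. The substitution $\sigma = \pi' \circ \pi^{-1}$ together with the relabelling $j = \pi(i)$ cleanly reduces the inner sum to the Leibniz determinant for each fixed even $\pi$, and the count $|\ev{m}| = m!/2$ (valid precisely for $m \ge 2$) finishes it.

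The paper takes a slightly different route. It starts from the double-epsilon identity $m!\det M = \sum \epsilon_{a}\epsilon_{b}\prod M_{a_ib_i}$, expands this into four blocks (even--even, even--odd, odd--even, odd--odd), and then shows by composing both permutations with a fixed transposition $s_{1m}$ that the odd--even block equals $\eo(M)$ and the odd--odd block equals $\evensum(M)$, whence $m!\det M = 2\evensum - 2\eo$. Your approach collapses these four blocks in one stroke via the change of variables, which is more economical for proving the lemma itself. What the paper's detour buys is that the intermediate identity
\[
\sum_{\pi \in \od{m}} \sum_{\pi' \in \ev{m}} \prod_{i} M_{\pi(i)\pi'(i)} = \eo(M)
\]
(and its companion for the odd--odd block) is recorded explicitly; this identity is invoked again in the proof of \lemref{if_not_SorP_then_not_weak_to_pow} when evaluating $D^{\odot k}(E^{\textup{o}},E^{\textup{e}})$ and $D^{\odot k}(E^{\textup{o}},E^{\textup{o}})$. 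Your method would of course yield the same identities by the same substitution trick applied with $\pi$ odd rather than even, but it does not produce them as a byproduct.
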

\begin{proof}
	\begin{equation}
		m! \det{M} = \sum_{a_1,a_2,\dotsc,a_m} \sum_{b_1,b_2,\dotsc,b_m} \epsilon_{a_1 a_2\dotsc a_m} \epsilon_{b_1 b_2\dotsc b_m} M_{a_1b_1} M_{a_2b_2} \dotsm M_{a_mb_m} \,.
	\end{equation}
	$\epsilon_{a_1,a_2,\dotsc,a_m}$ is only non-zero if the function $\pi:[m]\to [m]$ given by $\pi(i)=a_i$ is a permutation, and equals $+1$ if this permutation is even and $-1$ if it is odd. So we have,
	\begin{equation}
		m! \det{M} = \left( \sum_{\pi \in \ev{m}} - \sum_{\pi \in \od{m}} \right)\left( \sum_{\pi' \in \ev{m}} - \sum_{\pi' \in \od{m}} \right) \prod_{i=1}^m M_{\pi(i)\pi'(i)} \,. \label{detM_expanded}
	\end{equation}
	
	Let $s_{1m}$ be the transposition that exchanges $1$ and $m$. Note that $s_{1m} \circ \od{m} = \ev{m}$; i.e. $s_{1m}$ composed with all odd permutations is the set of all even permutations, and vice versa. Therefore,
	\begin{align}
	 & \sum_{\pi \in \od{m}} \sum_{\pi' \in \ev{m}} \prod_{i=1}^m M_{\pi(i)\pi'(i)} \nonumber
		\\* &= \sum_{\pi \in \od{m}} \sum_{\pi' \in \ev{m}} M_{\pi(1)\pi'(1)} M_{\pi(m)\pi'(m)} \prod_{i=2}^{m-1} M_{\pi(i)\pi'(i)}\nonumber
		\\ &= \sum_{\pi \in \od{m}} \sum_{\pi' \in \ev{m}} M_{\pi \circ s_{1m}(m)\pi'\circ s_{1m}(m)} M_{\pi \circ s_{1m} (1)\pi'\circ s_{1m}(1)} \prod_{i=2}^{m-1} M_{\pi \circ s_{1m}(i)\pi' \circ s_{1m}(i)}\nonumber
		\\ &= \sum_{\pi \in \ev{m}} \sum_{\pi' \in \od{m}} M_{\pi(m)\pi'(m)} M_{\pi(1)\pi'(1)} \prod_{i=2}^{m-1} M_{\pi(i)\pi'(i)}\nonumber
		\\ &= \sum_{\pi \in \ev{m}} \sum_{\pi' \in \od{m}} \prod_{i=1}^m M_{\pi(i)\pi'(i)}\nonumber
		\\ &= \eo(M) \,. \label{even_odd_is_odd_even}
	\end{align}
	Similarly,
	\begin{align*}
		\sum_{\pi \in \od{m}} \sum_{\pi' \in \od{m}} \prod_{i=1}^m M_{\pi(i)\pi'(i)} &= \sum_{\pi \in \ev{m}} \sum_{\pi' \in \ev{m}} \prod_{i=1}^m M_{\pi(i)\pi'(i)}
		\\ &= \evensum(M)\,.
	\end{align*}
	Thus, \eqref{detM_expanded} becomes
	\begin{equation}
		m! \det M = 2 \evensum(M) - 2 \eo(M) \,.
	\end{equation}
\end{proof}
\begin{lemma} \label{ee_and_eo_are_real}
	For a Hermitian matrix $M$ of order $m>1$, both $\evensum(M)$ and $\eo(M)$ are real.
\end{lemma}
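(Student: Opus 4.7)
The plan is to exploit the Hermiticity property $M_{ab} = M_{ba}^*$ directly on each summand, then relabel the permutation indices.

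For $\evensum(M)$, I would take the complex conjugate inside the sum. Since complex conjugation distributes over the finite product,
\begin{equation*}
\evensum(M)^* = \sum_{\pi \in \ev{m}} \sum_{\pi' \in \ev{m}} \prod_{i=1}^m M_{\pi(i)\pi'(i)}^* = \sum_{\pi \in \ev{m}} \sum_{\pi' \in \ev{m}} \prod_{i=1}^m M_{\pi'(i)\pi(i)}
\end{equation*}
by Hermiticity. The double sum ranges over $\ev{m} \times \ev{m}$, which is symmetric under the swap $\pi \leftrightarrow \pi'$, so renaming the dummy variables returns the original expression for $\evensum(M)$. Hence $\evensum(M)^* = \evensum(M)$, i.e.\ $\evensum(M) \in \mathbb{R}$.

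For $\eo(M)$, the same manipulation gives
\begin{equation*}
\eo(M)^* = \sum_{\pi \in \ev{m}} \sum_{\pi' \in \od{m}} \prod_{i=1}^m M_{\pi'(i)\pi(i)}
= \sum_{\sigma \in \od{m}} \sum_{\sigma' \in \ev{m}} \prod_{i=1}^m M_{\sigma(i)\sigma'(i)} \,,
\end{equation*}
after relabelling $\sigma := \pi'$ (now ranging over $\od{m}$) and $\sigma' := \pi$ (now ranging over $\ev{m}$). This last expression is precisely the odd-even sum shown in equation \eqref{even_odd_is_odd_even} of the proof of \lemref{form_of_det} to equal $\eo(M)$. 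Therefore $\eo(M)^* = \eo(M)$ and $\eo(M) \in \mathbb{R}$.

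Neither step presents a real obstacle; the only minor subtlety is recognising that for $\eo$ the index-swap takes an even-odd pairing to an odd-even pairing, which at first glance is a different sum, and invoking the previously established identity that these two sums coincide. The result for $\evensum$ is even more immediate, following purely from the $\pi\leftrightarrow\pi'$ symmetry of its index set.
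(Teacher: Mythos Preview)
Your argument for $\evensum(M)$ is exactly the paper's. For $\eo(M)$ you take a slightly different route: you conjugate directly, swap the roles of the two permutations to land on the odd--even sum, and then invoke the identity \eqref{even_odd_is_odd_even} to identify this with $\eo(M)$. The paper instead observes that $\det M$ is real for Hermitian $M$ and reads off the reality of $\eo(M)$ from the relation $m!\det M = 2\evensum(M) - 2\eo(M)$ of \lemref{form_of_det}, having already shown $\evensum(M)$ is real. Both arguments are short and correct; yours is a direct symmetry computation, while the paper's avoids re-using the permutation identity by leaning on the standard fact about Hermitian determinants.
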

\begin{proof}

	\begin{align*}
		\evensum(M)^* &= \sum_{\pi \in \ev{m}} \sum_{\pi' \in \ev{m}} \prod_{i=1}^m {M_{\pi(i)\pi'(i)}}^*
		\\ &= \sum_{\pi \in \ev{m}} \sum_{\pi' \in \ev{m}} \prod_{i=1}^m {M_{\pi'(i)\pi(i)}}
		\\ &= \evensum(M) \,.
	\end{align*}
$\det M$ is real for a Hermitian matrix and so \lemref{form_of_det} shows that $\eo(M)$ is also real.
\end{proof}
\begin{lemma} \label{change_sign_of_cos}
	Let $\theta \ne 0$ and $-\pi < \theta \le \pi$. Then there exist non-zero natural numbers $n,m \in {\mathbb{N}}$ such that
	\begin{equation}
		\cos(n \, \theta) < 0 \quad \textrm{and} \quad
		\cos(m \, \theta) \geq 0 \,.
	\end{equation}
\end{lemma}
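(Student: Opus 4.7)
The plan is to reduce to $\theta \in (0, \pi]$ using evenness of $\cos$, since $\cos(n\theta) = \cos(-n\theta)$ for any $n$, and then to exhibit $n$ and $m$ directly, with a small case analysis for $m$.

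For the existence of $n$ with $\cos(n\theta) < 0$, I would let $n_\ast$ be the smallest positive integer such that $n_\ast \theta \ge \pi/2$. By minimality $(n_\ast - 1)\theta < \pi/2$, so $n_\ast \theta < \pi/2 + \theta \le 3\pi/2$, putting $n_\ast \theta \in [\pi/2, 3\pi/2)$. If the inequality $n_\ast \theta > \pi/2$ is strict, set $n := n_\ast$; then $n\theta \in (\pi/2, 3\pi/2)$ and $\cos(n\theta) < 0$. In the boundary case $n_\ast \theta = \pi/2$, instead set $n := 2 n_\ast$, so that $n\theta = \pi$ and $\cos(n\theta) = -1 < 0$.

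For the existence of $m$ with $\cos(m\theta) \ge 0$, the cleanest route is a three-case split on $\theta \in (0, \pi]$. If $\theta \in (0, \pi/2]$, take $m := 1$, so $\cos(\theta) \ge 0$. If $\theta \in [3\pi/4, \pi]$, take $m := 2$, so that $2\theta \in [3\pi/2, 2\pi]$, on which $\cos \ge 0$. If $\theta \in (\pi/2, 3\pi/4)$, take $m := 3$, so that $3\theta \in (3\pi/2, 9\pi/4)$, whose reduction modulo $2\pi$ lies in $(3\pi/2, 2\pi) \cup [0, \pi/4)$, a region where $\cos \ge 0$.

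The only delicate point is the boundary case $n_\ast\theta = \pi/2$ of the first half, which is why the doubling trick is needed; every other step is immediate from the monotonicity and periodicity of $\cos$ on $[0, 2\pi]$.
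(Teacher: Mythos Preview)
Your proof is correct and follows essentially the same approach as the paper: reduce to $\theta\in(0,\pi]$ by evenness of cosine, use the same three-interval case split $(0,\pi/2]$, $(\pi/2,3\pi/4)$, $[3\pi/4,\pi]$ with the same choices $m=1,3,2$ for the nonnegative case, and find $n$ by pushing $n\theta$ just past $\pi/2$. The only cosmetic difference is that the paper writes $n=\lfloor \pi/(2\theta)+1\rfloor$ for $\theta\in(0,\pi/2]$ (and $n=1$ otherwise), which automatically lands $n\theta$ in $(\pi/2,\pi]$ and so avoids the boundary case $n_\ast\theta=\pi/2$ that forces your doubling trick; but this is the same idea packaged slightly differently.
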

\begin{proof}
	Since cosine is symmetric, it is sufficient to consider $\theta \in (0,\pi]$. For $\theta \in (0,\pi/2]$, we choose $m=1$ and $n$ to equal the floor
		\begin{equation}
		n  = \left\lfloor{  \frac{\pi}{2 \theta }} + 1 \right\rfloor \,.
	\end{equation}
	For $\theta \in (\pi/2,3\pi/4]$, we choose $n=1$ and $m=3$. Finally, for $\theta \in (3\pi/4,\pi]$, we choose $n=1$ and $m=2$.
\end{proof}

We now show that  any system that is in neither $\Strong$ nor $\PosEntry$ can be composed with itself a finite number of times to produce a quasi-system that is not in $\Weak$. This is the heart of the proof of Theorem 
\ref{composition_closed_and_self_dual_implies_strong_pos}. 
\begin{lemma} \label{if_not_SorP_then_not_weak_to_pow}
	If $\Sys=\System \in \Weak \setminus (\Strong \cup \PosEntry)$, then there exists $k \in {\mathbb{N}}$ such that $\Sys^{\odot k} \not \in \Weak$.
\end{lemma}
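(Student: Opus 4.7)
The strategy is to combine the determinantal negativity supplied by $\Sys \notin \Strong$ with the phase oscillation supplied by $\Sys \notin \PosEntry$, so as to construct an explicit event in some $\Sys^{\odot k}$ that witnesses failure of weak positivity.

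First I would extract the raw material. Using the atomic-refinement idea underlying \lemref{sum_of_strong_pos_events_is_strong_pos}, obtain from $\Sys \notin \Strong$ a finite collection $\{A_1, \ldots, A_m\} \subseteq \EA$ of pairwise disjoint events whose event matrix $M$ has $\det M < 0$; by \lemref{form_of_det} and \lemref{ee_and_eo_are_real} this is equivalent to $\evensum(M), \eo(M) \in \mathbb{R}$ with $\evensum(M) < \eo(M)$. From $\Sys \notin \PosEntry$, and again refining to atoms, obtain a pair of disjoint events $C_1, C_2 \in \EA$ with $D(C_1, C_2) = r e^{i\theta}$, $r > 0$ and $\theta \in (-\pi, \pi] \setminus \{0\}$; write $a := D(C_1, C_1)$ and $b := D(C_2, C_2)$.

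Next, in $\Sys^{\odot m}$ I would form the pairwise disjoint permutation events
\[
E_e := \bigsqcup_{\pi \in \ev{m}} A_{\pi(1)} \times \cdots \times A_{\pi(m)}, \qquad E_o := \bigsqcup_{\pi \in \od{m}} A_{\pi(1)} \times \cdots \times A_{\pi(m)}\,.
\]
The same bookkeeping as in the proof of \lemref{form_of_det}, together with the symmetry identity \eqref{even_odd_is_odd_even}, yields $D^{\odot m}(E_e, E_e) = D^{\odot m}(E_o, E_o) = \evensum(M)$ and $D^{\odot m}(E_e, E_o) = \eo(M)$. If $\evensum(M) < 0$ or $\evensum(M) + \eo(M) < 0$ then $\Sys^{\odot m} \notin \Weak$ via $E_e$ or $E_e \sqcup E_o$ respectively, and we are done with $k = m$. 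Otherwise $\evensum(M) \geq 0$ and $\eo(M) > 0$; I would then pass to $\Sys^{\odot (m+n)}$ and consider the disjoint event $F_n := (E_e \times C_1^n) \sqcup (E_o \times C_2^n)$, for which the composition rule \eqref{eq:dab} gives
\[
D^{\odot (m+n)}(F_n, F_n) \;=\; \evensum(M)\,(a^n + b^n) \;+\; 2\eo(M)\, r^n \cos(n\theta)\,.
\]
\lemref{change_sign_of_cos} then produces an $n$ with $\cos(n\theta) < 0$, so that the phase term is negative.

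The hard part will be to ensure that the positive ``mass'' term $\evensum(M)(a^n + b^n)$ is outweighed by the negative ``phase'' term $2\eo(M)\, r^n |\cos(n\theta)|$. This is trivial when $\evensum(M) = 0$, and is immediate when $r \geq \max(a, b)$ by iterating \lemref{change_sign_of_cos} and taking $n$ large with $\cos(n\theta)$ bounded away from zero (using the periodic or equidistributed structure of $\{n\theta \bmod 2\pi\}$). In the residual regime one must first unify the two extractions by taking the joint atomic refinement of $\{A_i\} \cup \{C_1, C_2\}$, so that $M$ itself carries a non-positive entry; one then modifies the construction, for example by nesting the permutation-event idea inside $\Sys^{\odot n}$ so that the scalar factors $a^n, b^n, r^n \cos(n\theta)$ are replaced by $\evensum,\eo$-type quantities inheriting the gap $\eo(M) - \evensum(M) > 0$, or by combining several $F_n$'s for different $n$ into a single event whose cross-terms are engineered to reinforce. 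This balancing of positive diagonal mass against cosine oscillation is the technical crux of the argument.
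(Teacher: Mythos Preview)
Your extraction of the two ingredients and the construction of the permutation events $E_e,E_o$ with $D^{\odot m}(E_e,E_e)=D^{\odot m}(E_o,E_o)=\evensum(M)$ and $D^{\odot m}(E_e,E_o)=\eo(M)$ is exactly the paper's starting point, and the quick disposals when $\evensum(M)<0$ or $\evensum(M)+\eo(M)<0$ are fine. The genuine gap is the balancing step, which you correctly flag as the crux but do not close. In your event $F_n$ the quantity to be made negative is $\evensum(M)(a^n+b^n)+2\,\eo(M)\,r^n\cos(n\theta)$, and you are trying to suppress the first term by sending $n\to\infty$. But nothing controls $r$ against $a,b$: if $r<\min(a,b)$ the phase term is exponentially crushed and no $n$ works. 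Even your sub-case $r\ge\max(a,b)$ is incomplete, since for rational $\theta/\pi$ the set $\{\cos(n\theta)\}$ is finite and need not contain a value below $-\evensum(M)/\eo(M)$. The joint-refinement and ``combine several $F_n$'' suggestions do not obviously repair this.

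The paper's fix is precisely the swap you gesture at in your last sentence but do not execute: iterate the permutation block, not the phase block. With $k=p+mq$, set $E^{\mathrm e}=C_1^{\,p}\times E_e^{\times q}$ and $E^{\mathrm o}=C_2^{\,p}\times E_o^{\times q}$ (that is, $q$ independent copies of the even/odd permutation construction). Then
\[
D^{\odot k}(E^{\mathrm e}\sqcup E^{\mathrm o},\,E^{\mathrm e}\sqcup E^{\mathrm o})
=(a^{p}+b^{p})\,\evensum(M)^{q}\;+\;2r^{p}\cos(p\theta)\,\eo(M)^{q}.
\]
Now the ratio being driven to zero is $\bigl(\evensum(M)/\eo(M)\bigr)^{q}$, which is \emph{guaranteed} strictly less than $1$ by $\det M<0$, independently of $a,b,r,\theta$. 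One first fixes $p$ via \lemref{change_sign_of_cos} so that $\cos(p\theta)<0$, freezing the two coefficients, and then chooses $q$ large (the cases $\eo(M)\le0$ or $\evensum(M)\le0$ are handled with $q=1$ and a sign choice for $\cos(p\theta)$; the degenerate case $a=b=0$ is disposed of directly via $C_1^{k}\sqcup C_2^{k}$). The moral: amplify on the $\evensum/\eo$ side, where the strict inequality is structurally forced, not on the $a,b,r$ side, where it is not.
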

\begin{proof}
	First, we write the values of $D$ in polar form
	\begin{equation}
		D(A,B) = r(A,B)\, e^{i \theta(A,B)} \,,
	\end{equation}
	where $r$ and $\theta$ are real functions on $\EA \times \EA$,
	\begin{equation}
		r(A,B) \geq 0, \quad  \theta(A,B) \in (-\pi,\pi] \quad \textrm{and} \quad \theta(A,B)=0 \quad \text{if } r(A,B)=0\,.
	\end{equation}
	Since $D$ is Hermitian, $r$ is symmetric and $\theta$ is antisymmetric, except when $\theta(A,B)=\pi=\theta(B,A)$.  $D(E,E)=r(E,E)$ for all $E \in \EA$.
	
	Now, since $\Sys \not \in \PosEntry$, there exists some $A,B \in \EA$ such that
	\begin{equation}
		\theta(A,B) \neq 0 \,.
	\end{equation}
	We want to find two \textit{disjoint} events with the above property. 	Recalling that $(1+A)$ is the complement of $A$, we define
	\begin{equation}
		A_1 := A \cdot B\,, \quad  A_2 := A \cdot (1+B)\,, \quad B_1 := A \cdot B \quad \textrm{and} \quad B_2 := B \cdot (1+A)\,.
	\end{equation}	
	These four events are pairwise disjoint, except for $A_1 = B_1$. 
We have $A=A_1\sqcup A_2$ and $B=B_1\sqcup B_2$. Thus,
	\begin{alignat}{2}
		r(A,B)\, e^{i \theta(A,B)} &= D(A_1 \sqcup A_2,B_1\sqcup B_2)
		\\ & = D(A_1, B_1) + D(A_1, B_2) + D(A_2, B_1) + D(A_2,B_2) && \BY{bi-additivity}
		\\ & = r(A_1, A_1) + D(A_1, B_2) + D(A_2, B_1) + D(A_2,B_2) && \BY{$A_1=B_1$.}
	\end{alignat}
	Since the phase on the left-hand side is non-zero, at least one of the last three $f(\cdot,\cdot)$ terms must also have a non-zero phase. Choose one of these terms with a non-zero phase and rename the first and second arguments of that term $\bar{A}$ and $\bar{B}$ respectively.  Then
	\begin{equation}
		\bar{\theta} := \theta\Arg[\big]{\bar{A},\bar{B}} \neq 0 \,,
	\end{equation}
	with $r\Arg[\big]{\bar{A},\bar{B}} \neq 0$ and  $\bar{A}$ and $\bar{B}$ disjoint.

	In addition, since $\Sys \not \in \Strong$, there exists a finite subset $\Eset{B} \subseteq \EA$ whose event matrix $M$ is not positive semi-definite. By considering the event matrix of the set of atoms of the event algebra generated by 
	$\Eset{B}$ and using  \lemref{sum_of_strong_pos_events_is_strong_pos},  we may assume that the elements of $\Eset{B}$ are pairwise disjoint. 
There exists a principal submatrix $N$ of $M$ such that 
	\begin{equation}
		\det N < 0 \,.
	\end{equation}
Since $N$ is a principal submatrix of $M$, it is the event matrix for some $\Eset{C} \subseteq \Eset{B}$. Let 
$\Eset{C} = \{F_1, F_2, \dots F_n\}$. 
The $F_i$ are pairwise disjoint. If $n=1$ then $N = D(F_1, F_1) < 0 $ and $D$ is not weakly positive and we are done. Therefore, from now on we assume $n>1$.

\noindent Consider now $\Sys^{\odot k} = \{ \Omega^k,\EA^{\odot k}, D^{\odot k}\} $. We will find an appropriate $k$ for each of a number of cases and subcases. 

\noindent \textit{Case (a)}:  $
		r\Arg[\big]{\bar{A},\bar{A}} = r\Arg[\big]{\bar{B},\bar{B}} = 0$.
		
			Consider the event $E \in \EA^{\odot k}$ given by
	\begin{equation}
		E = {\bar{A}}^k \sqcup {\bar{B}}^k = \underbrace{\bar{A} \times \bar{A} \times \dotsm \times \bar{A}}_k 
		\sqcup \underbrace{ \bar{B} \times \bar{B} \times \dotsm \times \bar{B}}_k \,.
	\end{equation}
	$\bar{A}^k $ and $\bar{B}^k $ are disjoint. Then, 
	\begin{align*}
		D^{\odot k}(E,E) &= D^{\odot k}\Arg[\big]{\bar{A}^k,\bar{A}^k} + D^{\odot k}\Arg[\big]{\bar{B}^k,\bar{B}^k} + D^{\odot k}\Arg[\big]{\bar{A}^k,\bar{B}^k} + D^{\odot k}\Arg[\big]{\bar{B}^k,\bar{A}^k}
		\\ &= D\Arg[\big]{\bar{A},\bar{A}}^k + D\Arg[\big]{\bar{B},\bar{B}}^k + D\Arg[\big]{\bar{A},\bar{B}}^k + D\Arg[\big]{\bar{B},\bar{A}}^k
		\\ &= r\Arg[\big]{\bar{A},\bar{B}}^k e^{i k \theta\Arg{\bar{A},\bar{B}}} + r\Arg[\big]{\bar{B},\bar{A}}^k e^{i k \theta\Arg{\bar{B},\bar{A}}}
		\\ &= 2 \, r\Arg[\big]{\bar{A},\bar{B}}^k \cos\Arg[\big]{k \bar{\theta}} \,,
	\end{align*}
	where we used the symmetry of $r$ and the antisymmetry of $\theta$. Using \lemref{change_sign_of_cos} we choose $k$ such that $\cos\Arg[\big]{k \, \bar{\theta}} < 0$ so that 		$D^{\odot k}(E,E) < 0$
	and we are done.

	\noindent \textit{Case (b)}: $
		r\Arg[\big]{\bar{A},\bar{A}} + r\Arg[\big]{\bar{B},\bar{B}}>0$.
		
Let $k = p + nq$ where  $p,q$ are positive integers and consider events $E^\textup{e},E^\textup{o} \in \EA^{\odot k}$, given by
	\begin{gather}
		E^\textup{e} = \bigsqcup_{ \pi_1 \in \ev{n} }  \bigsqcup_{ \pi_2 \in \ev{n} } \dotsb  \bigsqcup_{ \pi_q \in \ev{n} } \bar{A}^{p} \times \prod_{i_1=1}^n F_{\pi_1(i_1)} \times \prod_{i_2=1}^n F_{\pi_2(i_2)} \times \dotsm \times \prod_{i_q=1}^n F_{\pi_q(i_q)}\,,
		\\ E^\textup{o} = \bigsqcup_{ \pi_1 \in \od{n} }  \bigsqcup_{ \pi_2 \in \od{n} } \dotsb  \bigsqcup_{ \pi_q \in \od{n} } \bar{B}^{p} \times \prod_{i_1=1}^n F_{\pi_1(i_1)} \times \prod_{i_2=1}^n F_{\pi_2(i_2)} \times \dotsm \times \prod_{i_q=1}^n F_{\pi_q(i_q)} \,,
	\end{gather}
	where the Cartesian products ``$\prod_{i}$'' are taken in order, from left to right, as for example in,
	\begin{equation}
	 \prod_{i_1=1}^n F_{\pi_1(i_1)}  = F_{\pi_1(1)} \times F_{\pi_1(2)} \times \dots F_{\pi_1(n)} \,.
	 \end{equation}
	The symbol $\bigsqcup$ is used because the unions are indeed over disjoint events since  $\prod_i F_{\pi(i)}$ is disjoint from $\prod_i F_{\pi'(i)}$ when $\pi$ and $\pi'$ are different permutations.

	Let $E=E^\textup{e}\sqcup E^\textup{o}$, then
	\begin{equation}
		D^{\odot k}(E,E) = D^{\odot k}(E^\textup{e},E^\textup{e}) + D^{\odot k}(E^\textup{o},E^\textup{o}) +D^{\odot k}(E^\textup{e},E^\textup{o}) +D^{\odot k}(E^\textup{o},E^\textup{e}) \BY{bi-additivity.} \label{fEE_expanded_in_even_and_odd}
	\end{equation}
	Focussing on the third term,
	\begin{align*}
		& D^{\odot k}(E^\textup{e},E^\textup{o}) \nonumber
		\\* &= \underbrace{\sum_{\pi_1 \in \ev{n}} \sum_{\pi_2 \in \ev{n}} \dotsb \sum_{\pi_q \in \ev{n}} }_{\text{from $E^\textup{e}$}} \underbrace{\sum_{\pi_1' \in \od{n}} \sum_{\pi_2' \in \od{n}} \dotsb \sum_{\pi_q' \in \od{n}}}_{\text{from $E^\textup{o}$}} \nonumber
		\\* & \quad \begin{aligned}
			D^{\odot k} \Arg[\Bigg]{ & \bar{A}^p \times \prod_{i_1=1}^n F_{\pi_1(i_1)} \times \prod_{i_2=1}^n F_{\pi_2(i_2)} \times \dotsm \times \prod_{i_q=1}^n F_{\pi_q(i_2)}\,,
			\\* & \bar{B}^p \times \prod_{i_1'=1}^n F_{\pi_1'(i_1')} \times \prod_{i_2'=1}^n F_{\pi_2'(i_2')} \times \dotsm \times \prod_{i_q'=1}^n F_{\pi_q'(i_2')}} \BY{bi-additivity}
		\end{aligned}
		\\ &= \sum_{\pi_1 \in \ev{n}} \sum_{\pi_1' \in \od{n}} \sum_{\pi_2 \in \ev{n}} \sum_{\pi_2' \in \od{n}}\dotsb \sum_{\pi_q \in \ev{n}} \sum_{\pi_q' \in \od{n}} \nonumber
		\\* & \quad D\Arg[\big]{\bar{A},\bar{B}}^p \prod_{i_1=1}^n D\Arg[\Big]{F_{\pi_1(i_1)},F_{\pi_1'(i_1)}} \prod_{i_2=1}^n D\Arg[\Big]{F_{\pi_2(i_2)},F_{\pi_2'(i_2)}} \dotsm \prod_{i_q=1}^n D\Arg[\Big]{F_{\pi_q(i_q)},F_{\pi_q'(i_q)}}
		\\ &= D\Arg[\big]{\bar{A},\bar{B}}^p \Arg[\Bigg]{ \sum_{\pi \in \ev{n}} \sum_{\pi' \in \od{n}}\prod_{i=1}^n N_{F_{\pi(i)} F_{\pi'(i)}} }^q
		\\ &= D\Arg[\big]{\bar{A},\bar{B}}^p\, \eo(N)^q \,.
	\end{align*}
	We do a similar calculation for the other terms in \eqref{fEE_expanded_in_even_and_odd}. Then, we use the result from \eqref{even_odd_is_odd_even} to change a sum over even permutations on the first index and odd permutations on the second index to a sum over odd permutations on the first index and even permutations on the second, and similarly a sum over odd permutations to a sum over even permutations. We find that
	\begin{align*}
		D^{\odot k}(E,E) &= \Arg[\Big]{ D\Arg[\big]{\bar{A},\bar{A}}^p + D\Arg[\big]{\bar{B},\bar{B}}^p } \evensum(N)^q + \Arg[\Big]{ D\Arg[\big]{\bar{A},\bar{B}}^p + D\Arg[\big]{\bar{B},\bar{A}}^p} \eo(N)^q
		\\ &= x_p\, \evensum(N)^q + y_p \cos \Arg[\big]{ p \, \bar{\theta}} \, \eo(N)^q \,,
	\end{align*}
	where
	\begin{equation}
		x_p := r\Arg[\big]{\bar{A},\bar{A}}^p + r\Arg[\big]{\bar{B},\bar{B}}^p \quad \textrm{and} \quad  y_p := 2 r\Arg[\big]{\bar{A},\bar{B}}^p \,.
	\end{equation}
	Both $x_p$ and $y_p$ are {strictly} positive real numbers.

	By \lemref{form_of_det}, we have
	\begin{equation}
		n! \det N = 2\evensum(N) - 2\eo(N) \,.
	\end{equation}
	But $\det N$ is negative and by \lemref{ee_and_eo_are_real} both $\evensum(N)$ and $\eo(N)$ are real, so
	\begin{equation}
		\evensum(N) < \eo(N) \,. \label{ee_less_than_eo}
	\end{equation}

\textit{Subcase (i)}  $\eo(N) \leq 0$.  This implies $\evensum(N) < 0$. In this case, we choose $q=1$ and use \lemref{change_sign_of_cos} to choose a $p$ such that $\cos \Arg[\big]{ p \, \bar{\theta} } \geq 0$ to get
	\begin{align*}
		D^{\odot k}(E,E) &=  { x_p \,\evensum(N) } +  y_p  \cos \Arg[\big]{ p \, \bar{\theta} } \eo(N) 
		\\ &< 0\,.
	\end{align*}

\textit{Subcase (ii)}  $\eo(N) > 0$ and $\evensum(N) \le 0$. Choose $q=1$ and $p$ such that $\cos \Arg[\big]{ p \, \bar{\theta} } < 0$ (\lemref{change_sign_of_cos}). Then 
\begin{align*}
		D^{\odot k}(E,E) &=  x_p \, \evensum(N) +  y_p  \cos \Arg[\big]{ p \, \bar{\theta} } \eo(N)
		\\ &< 0\,.
	\end{align*}

\textit{Subcase (iii)}  $\eo(N) > 0$ and $\evensum(N) > 0$. Again choose a $p$ such that $\cos \Arg[\big]{ p \, \bar{\theta} } < 0$.
	Then
	\begin{equation}
		D^{\odot k}(E,E) = x_p \eo(N)^q \Arg[\bigg]{  \Arg[\bigg]{\frac{\evensum(N)}{\eo(N)}}^q +  \frac{y_p}{x_p} \cos \Arg[\big]{ p \, \bar{\theta} }  } \,.
	\end{equation}
	Since $\frac{\evensum}{\eo} < 1$, for large enough $q$ the first term in the brackets can be made arbitrarily small, while the second term is fixed and strictly negative. So there exists  $q \in {\mathbb{N}}$ 
	for which $D^{\odot k}(E,E)$ is negative. \end{proof}

\begin{corollary}\label{sunionp}
A tensor-closed subset of $\Weak$ is a subset of  $\Strong \cup \PosEntry$. 
\end{corollary}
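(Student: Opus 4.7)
The plan is to derive this corollary as an immediate consequence of Lemma \ref{if_not_SorP_then_not_weak_to_pow} by contraposition. All of the technical work has already been done in that lemma; what remains is simply to package it correctly.

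First I would let $\SysSpace{A} \subseteq \Weak$ be any tensor-closed subset, and argue by contradiction: suppose there exists a system $\Sys \in \SysSpace{A}$ with $\Sys \notin \Strong \cup \PosEntry$. Since $\SysSpace{A} \subseteq \Weak$, we have $\Sys \in \Weak \setminus (\Strong \cup \PosEntry)$, so Lemma \ref{if_not_SorP_then_not_weak_to_pow} applies and gives some $k \in \mathbb{N}$ with $\Sys^{\odot k} \notin \Weak$.

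The second step is to observe that tensor-closure of $\SysSpace{A}$ implies, by a trivial induction on $k$, that the iterated self-composition $\Sys^{\odot k}$ lies in $\SysSpace{A}$ for every positive integer $k$. Indeed, $\Sys^{\odot 1} = \Sys \in \SysSpace{A}$ by assumption, and if $\Sys^{\odot k} \in \SysSpace{A}$ then $\Sys^{\odot (k+1)} = \Sys^{\odot k} \odot \Sys \in \SysSpace{A}$ by the defining property of a tensor-closed set. In particular $\Sys^{\odot k} \in \SysSpace{A} \subseteq \Weak$, contradicting the conclusion of the lemma.

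The hard part is already behind us --- it was the case analysis in Lemma \ref{if_not_SorP_then_not_weak_to_pow} that produced a witnessing negative diagonal entry of $D^{\odot k}$. The only potential pitfall here is a notational one: one must be careful that $\odot$ composition is associative (up to the natural identification of Cartesian products), so that the notation $\Sys^{\odot k}$ is unambiguous and the induction above is legitimate; this follows directly from the definition of composition in Section \ref{section_composition}, where the composite history space is a Cartesian product and the composite functional is defined to multiply the component functionals factorwise.
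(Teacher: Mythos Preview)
Your proposal is correct and is exactly the intended derivation: the paper states this corollary immediately after Lemma~\ref{if_not_SorP_then_not_weak_to_pow} without writing out a proof, because the contrapositive plus induction on $k$ that you describe is the obvious (and only) way to extract it from that lemma. There is nothing to add.
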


\begin{theorem} \label{composition_closed_subset_of_P_or_S}
	If $\SysSpace{A} \subseteq \Weak$ is tensor-closed, then either $\SysSpace{A} \subseteq \Strong$ or $\SysSpace{A} \subseteq \PosEntry$.
\end{theorem}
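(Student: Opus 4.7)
The first step is to invoke \corref{sunionp}: any tensor-closed $\SysSpace{A} \subseteq \Weak$ is already contained in $\Strong \cup \PosEntry$. The remaining task is therefore to rule out the \emph{mixed} case, in which $\SysSpace{A}$ would simultaneously meet $\Strong \setminus \PosEntry$ and $\PosEntry \setminus \Strong$. The plan is to argue by contradiction: suppose $\Sys[1] = \System[1] \in \SysSpace{A} \cap (\Strong \setminus \PosEntry)$ and $\Sys[2] = \System[2] \in \SysSpace{A} \cap (\PosEntry \setminus \Strong)$. Tensor-closure gives $\Sys[1] \odot \Sys[2] \in \SysSpace{A}$, and then \corref{sunionp} forces $\Sys[1] \odot \Sys[2] \in \Strong \cup \PosEntry$; I would show it belongs to neither.

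The key technical device is a ``trivial slice'' embedding of either factor into the composite. Since $\HS[i] \in \EA[i]$ with $D_i(\HS[i], \HS[i]) = 1$, the basic composition rule \eqref{comp_rule_basic} yields
\[
(D_1 \odot D_2)(\HS[1] \times A, \HS[1] \times B) = D_2(A,B) \quad \text{for } A, B \in \EA[2],
\]
and symmetrically $(D_1 \odot D_2)(A \times \HS[2], B \times \HS[2]) = D_1(A,B)$ for $A, B \in \EA[1]$. Thus every event matrix of $\Sys[2]$ (resp.\ $\Sys[1]$) appears verbatim as an event matrix of the composite, indexed by the corresponding ``$\HS[1] \times \cdot$'' (resp.\ ``$\cdot \times \HS[2]$'') slice of events.

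Using this, I would conclude in two short steps. Because $\Sys[2] \notin \Strong$, some finite $\Eset{B} \subseteq \EA[2]$ has a non-PSD event matrix under $D_2$; pulling it back through $\setdef{\HS[1] \times B}{B \in \Eset{B}}$ gives a non-PSD event matrix for $\Sys[1] \odot \Sys[2]$, so $\Sys[1] \odot \Sys[2] \notin \Strong$. Because $\Sys[1] \notin \PosEntry$, there exist $A, B \in \EA[1]$ with $D_1(A,B) \notin \mathbb{R}_{\geq 0}$; then $(D_1 \odot D_2)(A \times \HS[2], B \times \HS[2]) = D_1(A,B) \notin \mathbb{R}_{\geq 0}$, so $\Sys[1] \odot \Sys[2] \notin \PosEntry$. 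These two facts contradict the conclusion of \corref{sunionp} applied to $\Sys[1] \odot \Sys[2] \in \SysSpace{A}$, and the mixed case is ruled out. I foresee no real obstacle here: all the heavy lifting already sits in \lemref{if_not_SorP_then_not_weak_to_pow} and its corollary; what remains is only this slice-embedding observation, which exploits normalization and the product composition rule to transport defects from a factor to the composite.
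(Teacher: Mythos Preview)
Your proof is correct and is essentially identical to the paper's own argument: both invoke \corref{sunionp}, assume the mixed case, and use the ``trivial slice'' embeddings $A \mapsto A \times \HS[2]$ and $B \mapsto \HS[1] \times B$ together with normalisation to transport the non-PSD event matrix and the non-positive entry from the respective factors into $\Sys[1] \odot \Sys[2]$. The only cosmetic difference is that the paper labels the systems the other way around ($\Sys[1] \in \PosEntry \setminus \Strong$, $\Sys[2] \in \Strong \setminus \PosEntry$).
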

\begin{proof} Let $\SysSpace{A}$ be tensor-closed. 
$\SysSpace{A} \subseteq \Strong \cup \PosEntry $.
Assume, for contradiction, there exist  $\Sys[1]=\System[1]$ and $ \Sys[2]=\System[2]$ both in  $\SysSpace{A}$ such that $\Sys[1] \in \PosEntry \setminus \Strong$ and $\Sys[2] \in \Strong \setminus \PosEntry$.  $\SysSpace{A}$ is tensor-closed so $\Sys[1] \odot \Sys[2]$ is in $\SysSpace{A}$.

	Since $\Sys[1] \not \in \Strong$, there exists some finite $\Eset{B} \subseteq \EA[1]$ with a corresponding event matrix $M$ that is not positive semi-definite. Consider the finite set of events
	\begin{equation}
		\Eset{C} = \setdef{ A \times \HS[2] }{ A \in \Eset{B}} \subseteq \EA[1] \odot \EA[2] \,.
	\end{equation}
	The corresponding event matrix $N$ for $D_1 \odot D_2$ is given by
	\begin{align*}
		N_{A \times \HS[2] B \times \HS[2]} &=  D_1(A,B)\,D_2(\HS[2],\HS[2])
		\\ &= M_{AB} \,,
	\end{align*}
	which implies it is also not positive semi-definite. Therefore, $\Sys[1] \odot \Sys[2] \not \in \Strong$.

	Also, since $\Sys[2] \not \in \PosEntry$, there exists some $A,B \in \EA[2]$ such that $D_2(A,B)$ is either negative or non-real. But then
	\begin{align*}
		D_1 \odot D_2(\HS[1] \times A,\HS[1] \times B) &= D_1(\HS[1],\HS[1])\, D_2(A,B)
		\\ &= D_2(A,B) \,,
	\end{align*}
	so $D_1 \odot D_2$ also has a negative or non-real entry. Therefore, $\Sys[1] \odot \Sys[2] \not \in \PosEntry$. 
	
	But  $\Sys[1] \odot \Sys[2] \not \in \Strong$ and $\Sys[1] \odot \Sys[2] \not \in \PosEntry$ contradicts 
	Corollary \ref{sunionp}.
\end{proof}

\begin{lemma} \label{subsets_and_duals}
	For any $\SysSpace{A},\SysSpace{B} \subseteq \Weak$,
	\begin{equation}
		\SysSpace{A} \subseteq \SysSpace{B} \implies \dual{\SysSpace{B}} \subseteq \dual{\SysSpace{A}}\,.
	\end{equation}
\end{lemma}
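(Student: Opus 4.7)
This is the standard contravariance property of an antitone Galois connection, and the proof is essentially just unfolding the definition of the Galois dual. No new machinery is required, and there is no genuine obstacle; the argument is a one-liner.

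The plan is to take an arbitrary $\Sys[1] \in \dual{\SysSpace{B}}$ and show $\Sys[1] \in \dual{\SysSpace{A}}$. By definition of the Galois dual, $\Sys[1] \in \dual{\SysSpace{B}}$ means that $\Sys[1] \in \Weak$ and $\Sys[1] \odot \Sys[2] \in \Weak$ for every $\Sys[2] \in \SysSpace{B}$. Since the hypothesis $\SysSpace{A} \subseteq \SysSpace{B}$ guarantees that every element of $\SysSpace{A}$ is in particular an element of $\SysSpace{B}$, the composition condition $\Sys[1] \odot \Sys[2] \in \Weak$ holds a fortiori for every $\Sys[2] \in \SysSpace{A}$. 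Together with $\Sys[1] \in \Weak$, this is exactly the defining condition for $\Sys[1] \in \dual{\SysSpace{A}}$.

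Since $\Sys[1]$ was arbitrary, we conclude $\dual{\SysSpace{B}} \subseteq \dual{\SysSpace{A}}$, as required. The ``hard part'' here is purely notational: making sure the quantifier over $\SysSpace{B}$ is correctly restricted to a quantifier over the smaller set $\SysSpace{A}$, which is the content of the subset inclusion. No properties of $\odot$ beyond the definition of $\dual{(\cdot)}$ are invoked, so the proof applies uniformly to all subsets of $\Weak$, without any distinction between finite and infinite systems.
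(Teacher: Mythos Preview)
Your proof is correct and follows essentially the same approach as the paper: take an arbitrary $\Sys[1] \in \dual{\SysSpace{B}}$, observe that the defining condition $\Sys[1] \odot \Sys[2] \in \Weak$ for all $\Sys[2] \in \SysSpace{B}$ restricts in particular to all $\Sys[2] \in \SysSpace{A} \subseteq \SysSpace{B}$, and conclude $\Sys[1] \in \dual{\SysSpace{A}}$.
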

\begin{proof}
	If $\Sys[1] \in \dual{\SysSpace{B}}$, then
	\begin{align*}
		&\Sys[1] \odot \Sys[2] \in \Weak \quad \forall \Sys[2] \in \SysSpace{B} \supseteq \SysSpace{A}
		\\ & \implies \quad \Sys[1] \odot \Sys[2] \in \Weak \quad \forall \Sys[2] \in \SysSpace{A}
		\\ & \implies \quad \Sys[1] \in \dual{\SysSpace{A}} \,.
	\end{align*}
\end{proof}

We are now ready to prove Theorem \ref{composition_closed_and_self_dual_implies_strong_pos}:

\begin{proof}
	Suppose $\SysSpace{A}$ is tensor-closed and Galois self-dual. We know from \thmref{composition_closed_subset_of_P_or_S} that either $\SysSpace{A} \subseteq \PosEntry$ or $\SysSpace{A} \subseteq \Strong$.

	If $\SysSpace{A} \subseteq \PosEntry$, then
	\begin{equation}
		{\SysSpace{A}} \subseteq {\PosEntry} \subset \dual{\PosEntry}\subseteq \dual{\SysSpace{A}} = 
		\SysSpace{A} \BY{\lemref{subsets_and_duals} and \corref{P_not_self_dual},}
	\end{equation}
	which is a contradiction.

	If $\SysSpace{A} \subseteq \Strong$, then
	\begin{equation}
		{\SysSpace{A}} \subseteq \dual{\Strong} = \Strong \subseteq \dual{\SysSpace{A}} = 
		{\SysSpace{A}} \BY{\lemref{subsets_and_duals} and \thmref{S_self_dual},}
	\end{equation}
	which implies $\SysSpace{A}=\Strong$.
\end{proof}

\section{Discussion}\label{discussion}

Our theorem adds to the evidence that strong positivity is the correct physical positivity condition on the 
decoherence functional/double path integral in both quantum measure theory and generalised quantum mechanics but it is not a proof because of the various assumptions we have made 
throughout, natural though they are.  

Why should compose-ability be a requirement at all? The physical universe is a whole and one might consider any attempt to split it up into subsystems as necessarily doing some kind of violence to it. Maybe in a truly cosmological quantum theory the question of composition of quantum systems might not arise but for now
it is hard to see how we can make progress without considering subsystems, both in isolation from and in interaction with others. One can consider the concept of a set of physically allowed systems, closed under composition as some sort of combined locality-cum-reproducibility requirement of a physical theory and it is all but universally assumed. As an example, the condition of compose-ability discriminates between different decoherence 
conditions in generalised quantum mechanics and in the decoherent histories approach 
to quantum foundations 
 \cite{PhysRevLett.92.170401, Halliwell:2009}.

It is also worth bearing in mind the possibility that the product composition law that seems natural for pure, unentangled states may not be, or may not always be, the appropriate composition law. As an illustration of the subtle issues that can arise when generalising from classical stochastic measure theories to quantum measure theories, consider the fact that even the product composition law for decoherence functionals of pure, unentangled states can result in 
correlations between events in the subsystems if one adopts the preclusion law that events of 
zero measure do not happen  \cite{Wallden:2012nh}. This is what Sorkin refers to as the ``radical inseparability'' 
of quantum systems analysed from the perspective of the path integral \cite{Sorkin:2013kxa}.

The issue of composition is 
intertwined, in quantum measure theory, with the question of the relationship between the 
complex decoherence functional and the real, non-negative quantum measure. Whilst the imaginary part of the decoherence functional of a single system does not 
affect the quantum measure of that system, if one composes the decoherence functionals of two 
subsystems using the product composition rule then the quantum measure of the composed system will depend on the imaginary parts of the decoherence functionals of the subsystems. This means that we 
cannot compose quantum measure systems by composing their quantum measures directly but must do it by composing their decoherence functionals. 

The previous remarks notwithstanding, one can nevertheless conceive of a theoretical landscape of quantum measure theories given to us, somehow, only by their measures and not by their decoherence functionals. How would one compose systems in this case? Sorkin showed \cite{Sorkin:1994dt} that there is a one-to-one correspondence between quantum measures, $\mu$, and real symmetric decoherence functionals, $D^\mu$ given by: 
 	\begin{align*}
\mu(\cdot) \mapsto &D^\mu(\cdot, \cdot) \,, \\
\textrm{where}, \ &D^\mu(A, B):= \frac{1}{2} \Big[\mu(A\cup B) + \mu(A\cap B) - \mu(A \setminus B) - \mu(B \setminus A)\Big]\ \ \forall A, B\in \EA \,.
	\end{align*}
To define the composition of two quantum measures, $\mu_1$ and $\mu_2$, then, one can form their real symmetric decoherence functionals, $D^{\mu_1}$ and $D^{\mu_2}$, compose these decoherence functionals and finally form the quantum measure from this composition. 
Now, in this landscape, all decoherence functions are real, and we can redo the work in this paper replacing the set $\Weak$ with $\Weak_R$ which is the subset of $\Weak$ with real decoherence functionals. Almost all the lemmas and theorems --- mutatis mutandis --- still hold, including  theorem \ref{composition_closed_subset_of_P_or_S}. The only thing that fails is the 
final result because the replacement of
$\Weak$ with $\Weak_R$ in the definition of Galois dual has the effect of making the 
set $\PosEntry$ of positive entry systems Galois self dual as well as tensor closed. So in a landscape of systems  with real decoherence functionals, our uniqueness theorem for strong positivity, theorem \ref{composition_closed_and_self_dual_implies_strong_pos} fails. 

One motivation for this work was that it might have an application or extension at the higher, super-quantum levels of the Sorkin nested hierarchy of  measure theories  \cite{Sorkin:1994dt}. We have shown that the set of strongly positive systems $\Strong$ is the unique set of quantum systems that is tensor-closed and Galois self-dual. This may be a useful clue for finding the correct, physical positivity condition for measure theories at levels of the Sorkin hierarchy above the quantum level. Strong positivity is a condition on the decoherence functional, $D$,  and not (directly) on the measure, $\mu$ and as we have seen above, we need decoherence  functionals to compose systems. So, to investigate composition of systems and the analogue of the strong positivity condition at higher levels, we need the analogue of the decoherence functional at higher levels.  

Consider for example level 3, the first super-quantum level. Given a complex, level 3 decoherence functional with \textit{three} arguments, $E(A, B, C)$ such that the functional is additive in each of its arguments, 
\begin{align*}
E(A_1\sqcup A_2, B, C) = E(A_1, B, C) + E(A_2, B, C) \,\quad \forall A_1,A_2, B, C \in \EA \,
\end{align*} 
and similarly for the other two arguments,  and such that 
\begin{align*}
E(A, A, A) \ge 0 \,\quad \forall A \in \EA \,,
\end{align*} 
then the measure, $\mu(A): = E(A,A,A)$ satisfies the level three Sorkin sum rule  \cite{Sorkin:1994dt}. 
There, however, the easy generalisations from the quantum level in the hierarchy end and a number of questions arise. 

The Sorkin hierarchy is nested so each level is contained in all higher levels. Thus, a classical, level 1 theory is a special case of a quantum, level 2 theory in this measure theoretic framework for 
classifying physical theories. Indeed, this is one of the reasons for expecting that a path integral framework is the right one for understanding how classical  physics emerges approximately from a fundamentally quantum theory. This nested relationship between classical and quantum measure theories can be expressed in terms of the decoherence functional in the following way.  If a quantum/level 2  decoherence functional satisfies 
$D(A,B) = D(A\cap B, A\cap B)$ for all events $A$ and $B$ in the event algebra then the measure 
defined by $\mu(A): = D(A,A)$  is classical \textit{i.e.} it satisfies the level 1, Kolmogorov sum rule. Conversely, given a 
classical measure $\mu$, one can define a decoherence functional: $D(A,B): = \mu(A\cap B)$. 
Now consider level 3, the first super-quantum level. Any level 2 measure is a special case of a level 3 measure: it satisfies the level 3 sum rule (and all higher level sum rules). But, can this inclusion of level 2 in level 3 be expressed in terms of decoherence functionals? Given a level
2 decoherence functional, $D(A,B)$,  can a level 3 functional, $E(A,B,C)$, be defined using $D$, such that $E(A,A,A) = D(A,A) $, \textit{i.e.} $E$ corresponds to the same measure?
What condition should replace the quantum/level 2 condition of Hermiticity? How do we describe the composition of two level 3 systems: is the same product rule as 
employed in this paper the right rule? What is the correct physical positivity condition on a level 3 decoherence functional? One strategy for discovering this condition, suggested by the results of this paper, is to seek a set of level 3 systems that is closed under composition and is maximal amongst such sets, in the hope that this may again prove to be a unique set, whose elements are characterised by a property one can recognise as a positivity condition. 

Finally, let us address Roger's particular concerns in quantum foundations in the context of the difference between generalised quantum mechanics (GQM) and quantum measure theory (QMT). In 1994 in a debate with Stephen Hawking,  Roger said \cite{HawkingPenrose1995}: 
\begin{quote}
Whatever ``reality'' may be, one has to explain how one perceives the world to be. [...] It seems to me that in order to explain how we perceive the world to be, within the framework of Quantum Mechanics, we shall need to have one (or both) of the following:\\
\begin{itemize}
\item[(A)] A theory of experience.
\item[(B)] A theory of real physical behaviour. 
\end{itemize}
\end{quote} 
 Roger goes on to state that he is  ``a strong B-supporter''. 

How do the two path integral approaches to quantum foundations QMT and GQM  fare when judged against  Roger's  (A) and/or (B)?   The main distinction between QMT and GQM is a fork in the road signposted by an attitude to measure theory in physics.  In GQM, the attitude taken is that a classical (level 1) measure is necessary to do physics and so the full event algebra must be restricted to a subalgebra on which the measure is classical.  In contrast, in QMT, the attitude taken is that the null---and very close to null---events exhaust the scientific content of a measure theory via what Borel called ``la loi unique du hasard'' (the only law of chance) namely that events of very small measure almost certainly don't happen.\footnote{See sections 2.2 and 6.2.2  of \cite{Shafer_2006} for a historical perspective on this view---also known as Cournot's Principle---in the context of an assessment of Kolmogorov's contributions to the foundations of probability theory.} In which case, additivity of the measure is not necessary. 

In GQM, then, one seeks a maximal subalgebra of the full algebra of events such that the measure restricted to that subalgebra is a classical measure to a very good degree of approximation. The set of atoms of such a subalgebra is a maximally fine-grained, decoherent set of coarse-grained histories, in the terminology of GQM \cite{Hartle:1992as, Hartle:1998yg, Hartle:2006nx}.  One then interprets the measure restricted to a decoherent subalgebra as a probability in the usual way as for a classical random process: exactly one atom of that subalgebra is realised, at random and the  measure of any event in the decoherent subalgebra is interpreted as the probability that that event happens, i.e. the probability that the single realised atom is a subevent of that event. In the case when the decoherence functional corresponds to an initial Schroedinger cat-type state for some macro-object, a pointer say, then -- heuristics and model calculations show -- there will be a decoherent subalgebra that contains amongst its atoms, one atom in which the pointer is in one of the positions of the superposition and another atom in which the pointer is in the other position. From this decoherent subalgebra, only one atom is realised -- either one or the other of the pointer positions -- which seems to indicate that GQM is a Penrose B-type theory. 

However, in GQM, for any system there are many ---  infinitely many ---  incomparable decoherent subalgebras, all on the same footing according to the axioms of GQM. If one atom is realised from the pointer subalgebra  then one atom is realised from  \textit{each} of the decoherent subalgebras. 
\cite{Dowker:1994dd,PhysRevLett.75.3038}. Without an extra axiom, a criterion for selecting one of the decoherent subalgebras from the many, 
 GQM is therefore a theory of many worlds.\footnote{These are not the same many worlds as in the Everett interpretation.} Those who claim that GQM is satisfactory in the absence of a physical subalgebra selection axiom must construct arguments  to try to explain why we nevertheless experience only one world. There is no consensus on whether the arguments that exist in the literature hold water but, it seems to me, there is consensus that such arguments are needed. GQM is therefore an A-type theory in the Penrose sense of needing to be supplemented by a theory of experience. 
 
By contrast, QMT is a One World theory  in which the physical world is conjectured to be exactly one \textit{co-event} or generalised history \cite{Sorkin:2006wq, Sorkin:2007uc, Sorkin:2010kg} in which every event in the full event algebra for the system either happens (is affirmed) or doesn't happen (is denied). The term co-event reflects that this physical information can be considered, mathematically, as a map from the event algebra to $\mathbb{Z}_2 = \{0, 1\}$ where  $1$ represents affirmation and $0$ represents denial. The theory provides the set of physically allowed co-events and exactly one of these corresponds to the physical world. The quantum measure restricts the possible physical co-events by the Law of Preclusion that null events are denied: $\mu(E)=0 \implies \phi(E) = 0$. 
This Law of Preclusion must be supplemented by other axioms for physically allowed co-events and the question of what these axioms are remains open, though proposals have been made and explored (see for example \cite{Sorkin:2006wq, Sorkin:2007uc, Sorkin:2010kg, Dowker:2007kz, Sorkin:2013kxa, Dowker_2017, wilkes2018evolving}). The ongoing search for a physical co-event scheme is guided by  several desiderata, including the requirement that the physically allowed co-events turn out to be classical when restricted to the subalgebra of localised, quasi-classical, macro-events.\footnote{A co-event is classical if and only if it is a homomorphism from its domain to $\mathbb{Z}_2$.} This would imply that exactly one atom of the macro-subalgebra is affirmed. QMT is a One World theory, which world should recover a classical picture when restricted to the sub-algebra of macro-events.  QMT is a Penrose B-type theory.

\section{Acknowledgments}
The results proved here are proved in the PhD thesis of HW \cite{wilkes:2019}. We thank Rafael Sorkin for introducing us to the concept of Galois connection and for helpful discussions. This research was supported in part by Perimeter Institute for Theoretical Physics. Research at Perimeter Institute is supported
by the Government of Canada through Industry Canada and by the
Province of Ontario through the Ministry of Economic Development
and Innovation.  
FD is supported in part by STFC grants ST/P000762/1 and ST/T000791/1 and by APEX grant APX${\backslash}$R1${\backslash}$180098. 
HW was supported by STFC grant ST/N504336/1.

\bibliography{../BIBLIOGRAPHY/refs}
\bibliographystyle{../BIBLIOGRAPHY/JHEP}

\end{document}